\journal{Algorithmica}
\newtheorem{theorem}{Theorem}
\newtheorem{corollary}{Corollary}
\newtheorem{definition}{Definition}
\newtheorem{lemma}{Lemma}
\newenvironment{example}{\paragraph{Example}}{\hfill \qed \\}
\DeclareMathOperator{\polylog}{polylog}
\begin{document}

\begin{frontmatter}

\title{Optimal Encodings for Range Majority Queries \tnoteref{CPM}}

\tnotetext[CPM]{An early version of this article appeared in {\em Proc. CPM
2014} \cite{NT14}.}

\author[GN]{Gonzalo Navarro\fnref{Nucleo}}
\ead{gnavarro@dcc.uchile.cl}
\author[ST]{Sharma V. Thankachan}
\ead{sharma.thankachan@gmail.com}

\fntext[Nucleo]{Partially funded by Millennium Nucleus Information and
Coordination in Networks ICM/FIC P10-024F, Chile.}

\address[GN]{Department of Computer Science, University of Chile,Chile.}
\address[ST]{Georgia Institute of Technology, USA.}

\begin{abstract}
We study the problem of designing a data structure that reports the 
positions of the distinct
$\tau$-majorities within any range of an array $A[1,n]$, without storing $A$. 
A $\tau$-majority in a range $A[i,j]$, for $0<\tau< 1$, is an element that 
occurs more than $\tau(j-i+1)$ times in $A[i,j]$. We show that 
$\Omega(n\log(1/\tau))$ bits are necessary for any data structure able just
to count the number of distinct $\tau$-majorities in any range. Then, we 
design a structure using $O(n\log(1/\tau))$ bits that returns one position of
each $\tau$-majority of $A[i,j]$ in $O((1/\tau)\log\log_w(1/\tau)\log n)$ time,
on a RAM machine with word size $w$ (it can output any further position where 
each $\tau$-majority occurs in $O(1)$ additional time). Finally, we show how
to remove a $\log n$ factor from the time by adding $O(n\log\log n)$ bits of
space to the structure. 
\end{abstract}

\end{frontmatter}

\section{Introduction}

Given an array $A[1, n]$ of $n$ arbitrary elements, an \emph{array range query} problem asks us to build a data structure over $A$,
 such that whenever a range $[i, j]$ with $1 \leq i \leq j \leq n$ arrives as an input, we can efficiently answer queries on the elements in $A[i, j]$
\cite{skala}. 
Many array range queries arise naturally as subproblems of combinatorial
problems, and are also of direct interest in data mining applications.
Well-known examples are range minimum queries (RMQs, which seek the smallest 
element in $A[i, j]$)~\cite{BV93}, top-$k$ queries (which report the $k$ 
largest elements in $A[i, j]$)~\cite{bro}, range selection queries (which report
the $k$th largest element in $A[i,j]$) \cite{CW13},
and colored top-$k$ queries (which
report the $k$ largest distinct elements in $A[i,j]$)~\cite{NNN}.
 
An {\em encoding} for array range queries is a data structure that answers the
queries without accessing $A$. This is useful when the values of $A$ are not 
of interest themselves, and thus $A$ may be deleted, potentially saving a lot 
of space. It is also useful when array $A$ does not fit in main memory, so it
can be kept in secondary storage while a much smaller encoding can be
maintained in main memory, speeding up queries. In this 
setting, instead of reporting an element in $A$, we only report a position in 
$A$ containing the element. Otherwise, in many cases we would be able to 
reconstruct $A$ via queries on the encodings, and thus these could not be small
(e.g., $A[i]$ would be the only answer to the range query $A[i,i]$ for all the 
example queries given above).
As examples of encodings, RMQs can be solved in constant time using just 
$2n+o(n)$ bits \cite{RMQ1} and, using $O(n\log k)$ bits, top-$k$ queries can be 
solved in $O(k)$ time \cite{grossi} and range selection queries
in $O(\log k / \log\log n)$ time \cite{NRR14}.

Frequency-based array range queries, in particular variants of
heavy-hitter-like problems, are very popular in data mining. Queries
such as finding the most frequent element in a range (known as the range mode 
query) are known to be harder than problems like RMQs.
For range mode queries, known data structures with constant query 
time require nearly quadratic space~\cite{R2}. The best known linear-space 
solution requires $O(\sqrt{n/\log n})$ query time~\cite{stacs}, and
conditional lower bounds given in that paper show that a significant improvement is 
highly unlikely.

Still, efficient solutions exist for some useful variants of the
range mode problem. An example are approximate range mode queries, where we are 
required to output an element whose number of occurrences in $A[i, j]$ is 
at least $1/(1+\epsilon)$ times the number of occurrences of the mode in $A[i,
j]$~\cite{lbo,lboo}. 

In this paper we focus on a popular
variant of range mode queries called \emph{range $\tau$-majority queries},
which ask to report any element that occurs more than
$\tau(j-i+1)$ times in $A[i,j]$. 
A version of the problem useful for encodings can be stated as follows (other
variants are possible).

\begin{definition}
Given an array $A[1,n]$, a {\em range $\tau$-majority query} receives a
range $[i,j]$ and returns one position in the range where each
$\tau$-majority in $A[i,j]$ occurs. A {\em $\tau$-majority} is any element that 
occurs more than $\tau(j-i+1)$ times in $A[i,j]$. When $\tau=1/2$ we simply
call it a {\em majority}.
\end{definition}

Range majority queries can be answered in constant time by maintaining a
linear space (i.e., $O(n)$-word or $O(n\log n)$-bit) data 
structure~\cite{Steph}. Similarly, range $\tau$-majority
queries can be solved in time $O(1/\tau)$ and linear space if $\tau$ is fixed
at construction time, or $O(n\log\log n)$ space (i.e., $O(n\log n\log\log n)$
bits) if $\tau$ is given at query time \cite{wads}.
 
In this paper, we focus for the first time on {\em encodings for range
$\tau$-majority queries}.
In this scenario, a valid question is how much space is necessary for an
encoding that correctly answers such queries (we recall that $A$ itself
is not available at query time). We answer that question in
Section~\ref{sec:lb}, proving a lower bound for any encoding that solves even
a weaker query.

\begin{theorem} \label{thm:lb}
Given a real number $0<\tau<1$, any encoding able to count the number of
range $\tau$-majorities in any range $A[i,j]$
must use $\Omega(n\log(1/\tau))$ bits.
\end{theorem}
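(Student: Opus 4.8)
The plan is to prove a lower bound of $\Omega(n\log(1/\tau))$ bits by an information-theoretic counting argument: I will exhibit a family of arrays that are pairwise distinguishable by range $\tau$-majority counting queries, and whose cardinality is $2^{\Omega(n\log(1/\tau))}$. Since any encoding must map distinct members of such a family to distinct bit-strings (otherwise two indistinguishable encodings would have to answer some query differently), the encoding needs at least $\log_2$ of the family size bits, giving the claimed bound.

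First I would fix $\tau$ and set a block size around $m = \lceil 1/\tau \rceil$, so that within a block of $m$ consecutive positions an element is a $\tau$-majority exactly when it appears at least twice (more precisely, querying a window of length $m$, an element occurring more than $\tau m \approx 1$ times qualifies). I would partition $A[1,n]$ into roughly $n/m$ blocks and, in each block, plant a structure whose internal pattern of repeated versus distinct values can be independently chosen from a set of about $m$ (equivalently, $1/\tau$) distinguishable configurations. The key is to arrange these configurations so that a single counting query aimed at a block — or a small number of such queries — reveals which configuration was planted, using only the count of distinct $\tau$-majorities and not their identities or positions. A clean way to do this is to let the $k$-th configuration place the duplicated value at position $k$ within the block, so that querying prefixes or suffixes of the block changes the majority count at a predictable threshold depending on $k$.

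The main obstacle will be ensuring \emph{independence and distinguishability simultaneously}: I must guarantee that the choices in different blocks do not interfere (a query localized to one block should not have its count perturbed by majorities spilling in from neighbors), and that the weak query — mere counting, not reporting — still separates all configurations. Interference is controlled by padding blocks with globally unique "filler" symbols that can never become $\tau$-majorities of any reasonably sized window, so that each block behaves as an isolated gadget. Distinguishability under counting-only is the subtler point: I would design the gadget so that the number of distinct $\tau$-majorities in a suitably chosen query window is a strictly monotone (hence injective) function of the configuration index $k$, which forces any correct encoding to preserve $k$.

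Putting these together, the number of independent choices is $\Omega(1/\tau)$ per block across $\Omega(n/(1/\tau)) = \Omega(n\tau)$ blocks is not quite what we want directly; rather, I would count $\log_2$ of the total family size as (number of blocks)$\,\times\,\log_2(\text{configurations per block})$. With $\Theta(n\tau)$ blocks each contributing $\Theta(\log(1/\tau))$ bits of entropy, this naively yields $\Theta(n\tau\log(1/\tau))$, so to reach $\Omega(n\log(1/\tau))$ I must instead pack $\Theta(1/\tau)$ \emph{independent} distinguishable bits into each block of size $\Theta(1/\tau)$ — i.e., a constant number of distinguishable bits \emph{per array position} scaled by $\log(1/\tau)$. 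Concretely I would let each position (or each group of $\Theta(1)$ positions) independently select one of $\Theta(1/\tau)$ symbols in a way that changes the $\tau$-majority count of some window containing it, so the per-position entropy is $\Theta(\log(1/\tau))$ bits; summing over $n$ positions gives the target $\Omega(n\log(1/\tau))$. The technical heart is verifying that these per-position symbol choices remain jointly recoverable from counting queries alone, which I expect to handle by a careful choice of windows whose majority counts encode the symbol choices via a prefix-sum-like telescoping.
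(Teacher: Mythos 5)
Your overall strategy coincides with the paper's: build a family of $2^{\Omega(n\log(1/\tau))}$ arrays pairwise distinguishable by counting queries alone, using blocks of length $\Theta(1/\tau)$ in which an element occurring twice in a suitable window is a $\tau$-majority while an element occurring once is not, with filler symbols isolating the blocks. You also correctly diagnose the crux: $\Theta(n\tau)$ blocks carrying $\Theta(\log(1/\tau))$ bits each yields only $\Theta(n\tau\log(1/\tau))$, so each block must carry $\Theta((1/\tau)\log(1/\tau))$ \emph{recoverable} bits. But the proposal stops exactly where the work begins: you never construct a gadget achieving that density, and the specific mechanism you sketch --- letting each position independently choose one of $\Theta(1/\tau)$ symbols and recovering the choices ``via a prefix-sum-like telescoping'' of majority counts --- fails as stated. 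In such a gadget a counting query on a window only reveals how many distinct values occur at least twice in it, i.e.\ the size of the intersection of the reference set with the \emph{set} of payload values in the window; it cannot attribute a duplication to a particular position. If the per-position choices are independent, two payload positions may pick the same symbol, and then extending the window by one position leaves the count unchanged: for example the payloads $(1,2,1)$ and $(1,2,2)$ induce identical counts on every window, so they cannot both be charged to the encoding. The telescoping recovers only the set of distinct symbols in each prefix, not the sequence, and the lower bound is $\log$ of the number of \emph{distinguishable classes}, not of the family you sampled from.

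The paper's resolution is precisely to drop independence and restrict to the distinguishable subfamily: it encodes a \emph{permutation} of $[3k]$ (split into three payload pieces of $k$ distinct values each, paired with reference runs $\ell,\dots,k$ and a filler block $-1,\dots,-2k$ that pads the windows to the right length), with $\tau=1/(2k+2)$. Because all payload values are distinct, each window $C[\ell,3k+g]$ has exactly one majority per payload value lying in $\{\ell,\dots,k\}$, and sweeping $\ell$ and $g$ recovers every $x_g$ exactly; the entropy $\lg\left((3k)!\right)=\Theta(k\log k)$ packed into $\Theta(k)$ positions gives the required $\Theta(\log(1/\tau))$ bits per position. Two further points you would still need: a verification that the window lengths make ``once'' fail and ``twice'' succeed as majorities uniformly over all $\ell,g$ (this is where the constant-factor padding matters), and a separate $\Omega(n)$ argument for the regime where $\tau$ is a constant bounded away from $0$, since there $\log(1/\tau)=O(1)$ and the permutation bound alone is vacuous; the paper handles this by encoding an arbitrary bitmap with four array cells per bit.
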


Since when using $O(n\log n)$ bits we have sufficient space to store $A[1,n]$%
\footnote{Or an equivalent array where each element is replaced by an
identifier in $[1,n]$.}
(and achieve the optimal $O(1/\tau)$ time \cite{wads}), encodings for range
$\tau$-majorities are asymptotically interesting only for $\log(1/\tau) = o(\log n)$.

In Section~\ref{sec:nlglgn} we show how range
$\tau$-majority queries can be solved using $O((n/\tau)\log\log n)$ bits of 
space and $O((1/\tau)\log n)$ query time. 
In Section~\ref{sec:nbits} we reduce the
space to the optimal $O(n\log(1/\tau))$ bits and slightly increase the time.
After spending this time, the structure can report {\em any} of the
positions of any majority in optimal time (e.g., the leftmost
position of each $\tau$-majority in a negligible $O(1/\tau)$ time).
In Section~\ref{sec:constr} we show how to build our structure in $O(n\log n)$ time. 
All the results hold on the RAM model with word size $w=\Omega(\log n)$ bits.

\begin{theorem} \label{thm:main}
Given a real number $0<\tau<1$, there exists an encoding using the optimal
$O(n\log(1/\tau))$ bits that answers range $\tau'$-majority queries, for any
$\tau \le \tau' < 1$, in time $O((1/\tau)\log\log_w(1/\tau)\log n)$, where
$w=\Omega(\log n)$ is the RAM word size in bits. 
It can report any $occ$ further
occurrence positions of the majorities in $O(occ)$ time.
The encoding can be built in $O(n\log n)$ time. 
\end{theorem}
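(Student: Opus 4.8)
The plan is to build an encoding for range $\tau$-majority queries by exploiting a key combinatorial fact: within any range $A[i,j]$, there are at most $\lfloor 1/\tau \rfloor$ distinct $\tau$-majorities, since each occupies more than a $\tau$-fraction of the range. This bounded-count property is what makes the problem tractable in sublinear space. My first step would be to identify a small set of "candidate" positions for each query range. A natural approach is a hierarchical decomposition: I would build a balanced binary tree (or a tree of logarithmic height) over the positions $[1,n]$, and at each node covering an interval $I$, precompute the set of elements that are $\tau$-majorities of $I$ — there are at most $1/\tau$ of them. A query range $[i,j]$ decomposes into $O(\log n)$ canonical tree intervals; the crucial observation is that any $\tau$-majority of $[i,j]$ must be a $\tau/2$-majority (or so) of at least one canonical piece, because if an element appears in $[i,j]$ more than $\tau(j-i+1)$ times and the range is split into $O(\log n)$ parts, by a pigeonhole/averaging argument it must be sufficiently frequent in one of them. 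This gives a candidate set of size $O((1/\tau)\log n)$.

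\medskip

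Given the candidate set, the second step is verification: for each candidate element $x$ and each candidate occurrence position $p$, I must count (or estimate) how many times $x$ occurs in $A[i,j]$ and test whether this exceeds $\tau(j-i+1)$. Since the encoding cannot store $A$, I cannot compare element identities directly. Instead I would store, for each position, enough information to locate other occurrences of the same element — for instance, a structure supporting "predecessor/successor occurrence of the same symbol" and "rank of a symbol within a range." The standard tool here is to reduce the verification to counting occurrences within $[i,j]$ via a succinct rank-style data structure keyed on the candidate positions themselves, so that identities are never exposed, only positions. To hit the target time $O((1/\tau)\log\log_w(1/\tau)\log n)$, the $\log\log_w(1/\tau)$ factor strongly suggests that the per-candidate verification uses a fusion-tree or $y$-fast-trie style predecessor structure over a universe of size $O(1/\tau)$, giving $O(\log\log_w(1/\tau))$ per operation, multiplied over the $O(\log n)$ canonical intervals and $O(1/\tau)$ candidates per interval.

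\medskip

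The third step is achieving the optimal $O(n\log(1/\tau))$ space bound, which is where Theorem~\ref{thm:lb} tells us we must land. The naive storage of candidate sets at every tree node costs $O(n\log n \cdot (1/\tau))$ or worse, so the heart of the construction is a space-reduction argument. I would aim to show that the candidate information across the tree can be encoded so that each of the $n$ positions contributes only $O(\log(1/\tau))$ bits, presumably by storing, per position, a bounded-range quantity (such as the index of the nearest relevant ancestor level at which the element becomes a majority, or a bucketed gap to the next same-symbol occurrence) that fits in $\log(1/\tau)$ bits. The reporting of any further $occ$ occurrences in $O(occ)$ time would follow from threading a linked structure (e.g., same-symbol successor pointers restricted to the candidate universe) so that, once one occurrence of a $\tau$-majority is found, neighbors are reachable in constant time each.

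\medskip

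\textbf{The main obstacle} I expect is precisely this last space-reduction step: reconciling the $O(\log n)$ candidate levels per query with an amortized $O(\log(1/\tau))$ bits per position. The difficulty is that candidate sets at different tree levels overlap and interact, so a careful charging argument — or a clever re-encoding that stores only incremental "first-becomes-majority" information per element as one descends the tree — will be needed to avoid paying $\log n$ bits per position. I would also anticipate that the construction algorithm in $O(n\log n)$ time requires computing all these candidate sets bottom-up while merging at most $1/\tau$ candidates per node, which should be feasible but requires care to keep the merging cost amortized correctly. The verification-without-identities design and the precise pigeonhole constant relating $\tau$ to $\tau'$ and the fraction needed per canonical piece are secondary technical points that I would settle once the high-level structure is fixed.
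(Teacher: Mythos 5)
Your candidate-generation step is sound --- indeed, with an exact canonical decomposition of $[i,j]$ into disjoint tree intervals, the averaging argument shows a $\tau$-majority of $[i,j]$ is a full $\tau$-majority (not just $\tau/2$) of some canonical piece, so a candidate set of size $O((1/\tau)\log n)$ exists. But this tree-based route is essentially the one taken by the earlier \emph{non-encoding} solutions you are implicitly following, and the two components on which the theorem actually hinges are exactly the ones you defer: (a) verifying, for a candidate position $p$, how often the (unknown) symbol $A[p]$ occurs in an arbitrary query range $[i,j]$ without access to $A$, and (b) doing all of this in $O(n\log(1/\tau))$ bits. Your suggestion of ``rank of a symbol within a range'' or ``same-symbol successor pointers'' is not available in the encoding model at any budget below $O(n\log n)$ bits if provided for all positions and all ranges; some structural restriction on \emph{where} each symbol's occurrences need to be countable is indispensable, and your proposal does not identify one. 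Likewise, you explicitly flag the space reduction as the main obstacle and offer only speculative devices (bucketed gaps, ``first-becomes-majority'' levels) without an argument that any of them fits in $O(\log(1/\tau))$ amortized bits per position. Since both the query-time claim and the space claim rest on these unresolved points, the proposal does not constitute a proof.

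For contrast, the paper avoids the tree decomposition entirely. For each symbol $x$ it marks in a bitmap $A_x$ the union of all ranges where $x$ is a $\tau$-majority, and records in a second bitmap $M_x$ which of those marked positions actually hold $x$; Lemma~\ref{lem:Ax} then reduces both detection and counting of $x$ in $A[i,j]$ to a constant number of $rank$ operations, with no need to know the identity of $x$. The space is controlled by two combinatorial facts: each occurrence of $x$ contributes only $O(1/\tau)$ ones to $A_x$ (Lemma~\ref{lem:five1s}), so all the $M_x$ fit in $O(n/\tau)$ positions with $n$ ones overall, and at most $O((1/\tau)\log n)$ distinct symbols can have $A_x[k]=1$ at any position $k$ (Lemma~\ref{lem:2lgn}), so the $A_x$ coalesce into $O((1/\tau)\log n)$ conflict-free bitmaps --- which is also what bounds the query time, since each coalesced bitmap is probed once. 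The final $O(n\log(1/\tau))$ bound comes from a run-length, level-bucketed representation of the coalesced bitmaps, and the $\log\log_w(1/\tau)$ factor from the $rank$ cost of the compressed $M'$ representation (Lemma~\ref{lem:bitmap}), not from a predecessor structure over candidates as you conjecture. If you wish to pursue your tree-based route, you would still need an analogue of Lemmas~\ref{lem:five1s} and~\ref{lem:2lgn} to make verification affordable, at which point the tree becomes redundant.
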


We note that the query time is simply $O((1/\tau)\log n)$ for polylogarithmic
values of $1/\tau$. We also note that the time depends on $\tau$, not $\tau'$.
In Section~\ref{sec:constr} we also show how to obtain a query time that is a 
function of $\tau'$, yet using $O(n\log^2(1/\tau))$ 
bits of space. 

Finally, in Section~\ref{sec:optimal} we derive a new variant
that may use more space but slashes the $\log n$ term from the time complexity.

\begin{theorem} \label{thm:optimal}
Given a real number $0<\tau<1$, there exists an encoding using 
$O(n\log(1/\tau)+n\log\log n)$ bits that answers range $\tau'$-majority queries,
for any $\tau \le \tau' < 1$, in time $O((1/\tau)\log\log_w(1/\tau))$, where
$w=\Omega(\log n)$ is the RAM word size in bits. 
It can report any $occ$ further
occurrence positions of the majorities in $O(occ)$ time.
The encoding can be built in $O(n\log n)$ time.
\end{theorem}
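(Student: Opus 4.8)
The plan is to keep the encoding of Theorem~\ref{thm:main} unchanged and to remove the single source of its extra $\log n$ factor, paying for this with $O(n\log\log n)$ additional bits. First I would re-examine the query algorithm underlying Theorem~\ref{thm:main} and observe that, restricted to one fixed block granularity (a single ``scale''), it already costs only $O((1/\tau)\log\log_w(1/\tau))$: this is the time to gather the $O(1/\tau)$ candidate $\tau'$-majorities of a length-$\ell$ range through the predecessor searches used there and to verify them. The stated $\log n$ factor is exactly the price of repeating this one-scale computation over the whole chain of $\Theta(\log n)$ dyadic scales, which the base structure is forced to do because it cannot tell in advance at which resolution a given majority's occurrences are concentrated; the product form of the time bound in Theorem~\ref{thm:main} already reflects this repetition. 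My target is therefore to precompute, in $O(n\log\log n)$ bits, enough information to identify the correct scale directly and collapse the $\Theta(\log n)$-scale scan to a single scale.

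Concretely, I would store one field of $O(\log\log n)$ bits per position, for $O(n\log\log n)$ bits in total, which suffices because there are only $\Theta(\log n)$ scales and the index of a scale fits in $O(\log\log n)$ bits. Each field records, for the block anchored at that position, the scale offset at which that block's contribution to a range majority must already be visible. At query time, after covering $A[i,j]$ by its $O(1/\tau)$ blocks at granularity $\Theta(\tau\ell)$ with $\ell=j-i+1$, a constant-time lookup of these fields replaces the level-by-level descent, and a single candidate-gathering pass is run. The reporting of $occ$ further occurrence positions is inherited verbatim from Theorem~\ref{thm:main}, preserving the $O(occ)$ bound.

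The crux is a capture lemma: with the labels set appropriately, every $\tau'$-majority of $A[i,j]$ appears among the $O(1/\tau)$ candidates collected at the indicated single scale, and the candidate count stays $O(1/\tau)$. I expect this to be the main obstacle, because the two requirements pull in opposite directions: a value that is a $\tau'$-majority of the whole range may be weak in most covering blocks, so the witnessing scale must be coarse enough not to miss it, yet fine enough that the covering still uses only $O(1/\tau)$ blocks and each contributes $O(1)$ candidates. I would prove it by a pigeonhole argument on how a majority's more-than-$\tau'\ell$ occurrences distribute across the dyadic hierarchy, showing that the ``critical scale'' of such a value is pinned down by $\ell$ up to a constant number of levels; the labels then store exactly this constant offset.

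Finally I would discharge the bookkeeping. The space is the $O(n\log(1/\tau))$ bits of Theorem~\ref{thm:main} plus the $O(n\log\log n)$ bits of labels, matching the claim. For construction, all labels can be computed by a single bottom-up sweep over the $\Theta(\log n)$ scales, propagating each block's critical-scale offset; this folds into the $O(n\log n)$-time construction of the base structure, so the total build time remains $O(n\log n)$.
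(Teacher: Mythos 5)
Your plan follows the right general instinct---spend $O(n\log\log n)$ extra bits to tell the query which of the many coalesced bitmaps to probe---but the central ``capture lemma'' you propose is false, and the data you plan to store is too little to repair it. The $\log n$ factor in Theorem~\ref{thm:main} is the number of coalesced bitmaps, $O((1/\tau)\log n)$, organized into $O(\log n)$ levels by the \emph{lengths of the maximal segments} $C_x$, not by a dyadic decomposition of the query range. A $\tau'$-majority $x$ of $A[i,j]$ is witnessed by a maximal segment $C_x \supseteq [i,j]$ whose length can be anything from $j-i+1$ up to $n$ (an element dense throughout $A$ has $C_x=[1,n]$ yet is a majority of tiny ranges), so the level of the bitmap holding the witness is \emph{not} pinned down by $j-i+1$ up to a constant offset; worse, the up-to-$1/\tau$ distinct majorities of a single range can sit at levels spread across the entire hierarchy. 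Consequently a single $O(\log\log n)$-bit ``scale offset'' per position cannot identify the set of bitmaps that must be probed, and no pigeonhole argument on how one value's occurrences distribute over dyadic scales will rescue this, because the difficulty is about \emph{several distinct values} living at incomparable scales.

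The paper's actual device is different: it tiles $A$, for every level $\ell$, with $O(n/2^\ell)$ overlapping pieces of length $2^\ell$, proves that at most $4/\tau$ maximal segments can be \emph{relevant} to a piece (each such segment must contribute at least $(\tau/4)2^\ell$ pairwise-disjoint occurrences inside the piece), and stores for each piece the explicit \emph{set} of the $O(1/\tau)$ relevant coalesced-bitmap indices, $\gamma$-coded differentially in $O((1/\tau)\log\log n)$ bits per piece, which sums to $O(n\log\log n)$ bits over all pieces and levels. The query locates its piece and verifies only those $O(1/\tau)$ bitmaps. You would also need the separate per-position structure the paper adds for ranges shorter than $1/\tau$, where no piece has been built; it stores for each $k$ the coalesced bitmap containing $C_{A[k]}$, costs a further $O(n\log(1/\tau)+n\log\log n)$ bits, and is the reason that term appears in the theorem statement. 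As written, your argument has a genuine gap at its key lemma and omits the short-range case entirely.
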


By combining the results of Theorems~\ref{thm:main} and \ref{thm:optimal}, we
obtain the combinations given in Table~\ref{tab:final}.

\begin{table}[t]
\begin{center}
\begin{tabular}{c|c|c}
Condition        &   Space (bits)             & Query time \\ 
& & \\ [-1.5ex]
\hline\\ [-1.5ex]
$1/\tau = \omega(\polylog n)$ & ~~$O(n\log(1/\tau))$~* &
$O((1/\tau)\log\log_w(1/\tau))$ \\ 
& & \\[-1.5ex]
$1/\tau = \Theta(\polylog n)$ & ~~$O(n\log(1/\tau))$~* & $O(1/\tau)~*$ \\
& & \\[-1.5ex]
$1/\tau = o(\polylog n)$ & ~~$O(n\log(1/\tau))$~* & $O((1/\tau)\log n)$ \\
& & \\[-1.5ex]
$1/\tau = o(\polylog n)$ & $O(n\log\log n)$ & $O(1/\tau)~*$ \\
\end{tabular}
\end{center}
\caption{Space-time tradeoffs achieved. We mark the optimal spaces and
times with a *.}
\label{tab:final}
\end{table}

\section{Related Work}

In this section we first cover the state of the art for answering range 
$\tau$-majority
queries. Then, we survey a few results on bitmap representation, and
give a new result that will be useful for this paper. Again, all these results
hold on the RAM model with word size $w=\Omega(\log n)$ bits.

\subsection{Range Majorities}

Range $\tau$-majority queries were introduced by Karpinski and
Nekrich~\cite{KarpinskiN08}, who presented an 
 $O(n/\tau)$-words structure with $O((1/\tau)(\log\log n)^2)$ query time.
Durocher et al.~\cite{Steph} improved their word-space and query time to
$O(n\log(1/\tau))$ and $O(1/\tau)$, respectively. 
 Gagie et al.~\cite{spire} presented another trade-off, where the space is
$O(n(H+1))$ {\em bits} and the query time is $O((1/\tau)\log\log n)$. 
 Here $H\le\lg n$ denotes the empirical entropy of the distribution of
elements in $A$ (we use $\lg$ to denote the logarithm in base 2).
  The best current result in general 
is by Belazzougui et al.~\cite{wads}, where the
space is $O(n)$ words and the query time is $O(1/\tau)$. 
  All these results assume that $\tau$ is fixed at  construction time. 

For the case where
 $\tau$ is also a part of the query input, data structures of space (in words)
$O(n(H+1))$ and $O(n\log n)$ were proposed by Gagie et al.~\cite{spire} and
Chan et al.~\cite{swat}, respectively. 
 Very recently, Belazzougui et al.~\cite{wads} brought down the space
occupancy to $O(n\log\log\sigma)$ words, where $\sigma$ is the number of distinct elements in $A$. 
 The query time is $O(1/\tau)$ in all cases.
Belazzougui et al.~\cite{wads} also presented a compressed solution using
$nH+o(n\log \sigma)$ bits,
  with slightly higher query time. 
All these solutions
include a (sometimes compressed) representation of $A$, thus they are not
encodings. As far as we know, ours is the first encoding for this problem. 
  
For further reading, we recommend the recent survey by Skala~\cite{skala}. 

\subsection{Bitmap Representations}
\label{sec:bitmaps}

Given a bitmap $B[1,m]$ with $n$ 1s, the operation $rank(B,i)$ returns the number
of 1s in $B[1,i]$, whereas operation $select(B,j)$ gives the position of the 
$j$th 1 in $B$. Both operations can be solved in constant time by storing 
$o(m)$ bits in addition to $B$ \cite{Mun96,Cla96}. When $n$ is significantly
smaller than $m$, another useful representation \cite{RRR07} compresses $B$ to
$n\lg\frac{m}{n}+O(n)+o(m)$ bits and retains constant time for both
operations.

When $n$ is much smaller than $m$, even the $o(m)$ extra bits of that
compressed representation \cite{RRR07} are troublesome, and an Elias-Fano-based
\cite{Fan71,Eli74}
compressed representation \cite{OS07} is useful. It requires $n\lg\frac{m}{n}
+ O(n)$ bits, solves $select$ in $O(1)$ time and $rank$ in
$O(\log\frac{m}{n})$ time. The representation considers the positions of all
the 1s in $B$, $x_i = select(B,i)$, and encodes the lowest $b = 
\lceil \lg\frac{m}{n} \rceil$ bits of each $x_i$ in an array $L[1,n]$, 
$L[i] = x_i~\mathrm{mod}~2^b$. Then it defines a bitmap $H[1,2n]$ that encodes 
the highest bits of the $x_i$ values: all the bits at positions 
$i+(x_i ~\mathrm{div}~ 2^b)$ are set in $H$. Bitmap $H$ is indexed for 
constant-time $rank$ and $select$ queries \cite{Mun96,Cla96}. The space 
for $L[1,n]$ is $n\lceil \lg\frac{m}{n} \rceil$ and $H$ uses $2n+o(n)$ bits. 

Now, $select(B,j) = 2^b (select(H,j)-j) + L[i]$ can be computed in constant time.
For $rank(B,i)$, we observe that the $h$th 0 in $H$ represents the point where
the position $B[2^b h]$ is reached in the process of setting the 1s at 
positions $i+(x_i ~\mathrm{div}~ 2^b)$, that is, $x_{i-1} < 2^b h \le x_i$. 
The number of 1s in $H$ up to that 
position is $rank(B,2^b h)$. Therefore, if we write $i = 2^b h + l$, then 
$rank(B,i)$ is between $j_1 = rank(H,select_0(H,h))+1$ and $j_2 = 
rank(H,select_0(H,h+1))$, where $select_0(H,h)$ gives the position of the
$h$th 0 in $H$ and is also computed in constant time and $o(n)$ bits
\cite{Mun96,Cla96}.
Now we binary search for $l$ in $L[j_1,j_2]$, which is increasing in that
range. The range is of length at most $2^b$, so the search takes
$O(b)=O(\log\frac{m}{n})$ time. The final position $j$ returned by the search
is $rank(B,i)$.

The time can be improved to $O(\log\log_w\frac{m}{n}+\log s)$ on a RAM machine 
of $w$ bits by sampling, for each increasing interval of $L$ of length more 
than $s$, one value out of $s$. Predecessor data structures are built on the
samples of each interval, taking at most $O((n/s)\log\frac{m}{n})$ bits. Then 
we first run a predecessor query on $L[j_1,j_2]$, which takes time
$O(\log\log_w\frac{m}{n})$ \cite{PT08}, and finish with an $O(\log s)$-time
binary search between the resulting samples.

\begin{lemma} \label{lem:bitmap}
A bitmap $B[1,m]$ with $n$ 1s can be stored in $n\log\frac{m}{n} +
O((n/s)\log\frac{m}{n} + n)$
bits, so that $select$ queries take $O(1)$ time and $rank$ queries take
$O(\log\log_w\frac{m}{n}+\log s)$, for any $s$, on a RAM machine of $w$ bits.
\end{lemma}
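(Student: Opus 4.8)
The plan is to build on the Elias-Fano representation described just before the lemma statement, where the only slow operation was a binary search over an increasing subarray $L[j_1,j_2]$ of length at most $2^b = O(m/n)$. The entire cost of $rank(B,i)$ is dominated by locating the predecessor of the query key $l$ inside this increasing range, and the text already establishes that everything else (the $select$ and $select_0$ computations on $H$, and the arithmetic to recover $j_1,j_2$) runs in $O(1)$ time. So the lemma is really a statement about speeding up this one predecessor search, and my proof would make that reduction explicit first.

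First I would describe the sampling scheme precisely. The array $L$ decomposes into maximal increasing runs (each run corresponds to one of the $0$-delimited blocks of $H$, i.e.\ one value of $h$). Within each such increasing run, I sample every $s$-th element, so a run of length $\ell$ contributes $O(\ell/s)$ sampled values and there are $O(n/s)$ samples in total. On the samples of each run I build a predecessor structure; since each sampled key fits in $b = \lceil \lg\frac{m}{n}\rceil$ bits, a run with $k$ samples needs $O(k\,b) = O(k\log\frac{m}{n})$ bits, summing to $O((n/s)\log\frac{m}{n})$ bits over all runs. This is exactly the extra term claimed in the space bound, added to the $n\lceil\lg\frac{m}{n}\rceil$ bits of $L$ itself and the $2n+o(n)$ bits of $H$ (all absorbed into the $O(n)$ term).

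Next I would spell out the two-phase query. Given the range $[j_1,j_2]$ and key $l$, I run a predecessor query over the samples of that run, which returns two consecutive samples bracketing $l$; by the classic result of P\v{a}tra\c{s}cu and Thorup~\cite{PT08} this costs $O(\log\log_w\frac{m}{n})$ time on keys of $b = O(\log\frac{m}{n})$ bits. The two bracketing samples are at most $s$ positions apart in $L$, so a final binary search on that subrange takes $O(\log s)$ time and pins down the exact predecessor, which equals $rank(B,i)$. Adding the two phases gives the claimed $O(\log\log_w\frac{m}{n}+\log s)$ bound, and $select$ is untouched, remaining $O(1)$.

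The only subtlety I would take care to verify is that the predecessor structure is confined to a single increasing run, so that its keys really are sorted and the P\v{a}tra\c{s}cu--Thorup bound applies with the correct universe size $2^b$; one must also confirm that the interval $[j_1,j_2]$ computed from $H$ never straddles a run boundary, which holds because the $0$s of $H$ are exactly the run delimiters used to define $j_1$ and $j_2$. This is the main point to get right, but it follows directly from the structure of the Elias-Fano encoding already laid out, so no genuine obstacle remains; the rest is routine bookkeeping of the space and time terms.
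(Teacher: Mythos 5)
Your proposal matches the paper's own argument essentially step for step: the same Elias--Fano decomposition, the same per-run sampling of every $s$-th element of $L$ costing $O((n/s)\log\frac{m}{n})$ bits, and the same two-phase query combining a P\v{a}tra\c{s}cu--Thorup predecessor search on the samples with an $O(\log s)$ binary search between consecutive samples. It is correct and requires no changes.
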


\section{Lower Bounds} \label{sec:lb}

We derive a lower bound on the minimum size range $\tau$-majority encodings
may have, even if we just ask them to count the number of distinct 
$\tau$-majorities present in any range. The idea is to show that we can 
encode a certain combinatorial
object in the array $A$, so that the object can be recovered via range
$\tau$-majority queries. Therefore, in the worst case, the number of bits
needed to solve such queries must be at least the logarithm of the number 
of distinct combinatorial objects that can be encoded.

Consider a sequence of $m$ permutations on $[3k]$. There are $(3k)!^m$ such 
sequences, thus any encoding for them must use at least $m\lg ((3k)!)$ bits 
in the worst case. Now consider the following encoding. Array $A$ will have 
length $n=36\cdot k\cdot m$. To encode the $i$th permutation, $\pi_i = 
(x_1~x_2~\ldots~x_{3k})$, we will write 9 chunks on $A[36k(i-1)+1,36ki]$:

\begin{eqnarray*}
 1,2,3,\ldots,k, & {-}1,{-}2,{-}3,\ldots,{-}2k, & x_1,x_2,x_3,\ldots,x_k \\
 k{+}1,k{+}2,k{+}3,\ldots,2k,& {-}1,{-}2,{-}3,\ldots,{-}2k,& x_1,x_2,x_3,\ldots,x_k \\
 2k{+}1,2k{+}2,2k{+}3,\ldots,3k,& {-}1,{-}2,{-}3,\ldots,{-}2k,& x_1,x_2,x_3,\ldots,x_k \\
 1,2,3,\ldots,k,& {-}1,{-}2,{-}3,\ldots,{-}2k,& x_{k{+}1},x_{k{+}2},x_{k{+}3},\ldots,x_{2k} \\
 k{+}1,k{+}2,k{+}3,\ldots,2k,& {-}1,{-}2,{-}3,\ldots,{-}2k,& x_{k{+}1},x_{k{+}2},x_{k{+}3},\ldots,x_{2k} \\
 2k{+}1,2k{+}2,2k{+}3, \ldots,3k,& {-}1,{-}2,{-}3,\ldots,{-}2k,& x_{k{+}1},x_{k{+}2},x_{k{+}3},\ldots,x_{2k} \\
 1,2,3,\ldots,k,& {-}1,{-}2,{-}3,\ldots,{-}2k,& x_{2k{+}1},x_{2k{+}2},x_{2k{+}3},\ldots,x_{3k} \\
 k{+}1,k{+}2,k{+}3,\ldots,2k,& {-}1,{-}2,{-}3,\ldots,{-}2k,& x_{2k{+}1},x_{2k{+}2},x_{2k{+}3},\ldots,x_{3k} \\
 2k{+}1,2k{+}2,2k{+}3,\ldots,3k,& {-}1,{-}2,{-}3,\ldots,{-}2k,& x_{2k{+}1},x_{2k{+}2},x_{2k{+}3},\ldots,x_{3k} \\
\end{eqnarray*}

We will set $\tau = 1/(2k+2)$ and perform $\tau$-majority queries on parts of 
$A$ to recover any permutation.

Let us start obtaining $\pi_i(1)=x_1$. Let $C[1,36] = A[36k(i-1)+1,36ki]$.
Consider an interval of the form $$C[\ell,3k+g] = \ell,\ell+1,\ldots,k,
{-}1,{-2},\ldots,{-}2k,x_1,x_2,\ldots,x_g,$$ for $1 \le \ell,g \le k$.
Note that $x_1,\ldots,x_g$ are the only values that may appear twice in 
$C[\ell,3k+g]$, precisely, if they belong to $\{\ell,\ldots,k\}$. Note that
elements appearing once in $C[\ell,3k+g]$ are not $\tau$-majorities, since 
$1 \le \tau(3k+g-\ell+1)$ for any values $k,\ell,g$. On the other hand, if 
an element appears twice in $C[\ell,3k+g]$, then it is a $\tau$-majority, since
$2 > \tau(3k+g-\ell+1)$ for any values $k,\ell,g$.

With this tool, we can discover $x_1$ as follows. First, we ask whether there 
is a $\tau$-majority in $C[1,3k+1]$. If there is none, then $x_1 \not\in 
\{1,\ldots,k\}$, and we have to look for it elsewhere (in $C[4k+1,8k]$ or 
$C[8k+1,12k]$). Assume there is a $\tau$-majority in $C[1,3k+1]$; then 
$x_1 \in \{1,\ldots,k\}$. Now we query the range $C[2,3k+1]$. If there is no 
$\tau$-majority, then $x \not\in \{2,\ldots,k\}$, and we conclude that $x_1=1$.
If there is, then $x \in \{2,\ldots,k\}$ and we query the range $C[3,3k+1]$.
If there is no $\tau$-majority, then $x \not\in \{3,\ldots,k\}$ and we 
conclude that $x_1=2$, and so on. The process is continued, if necessary, until
querying the range $C[k,3k+1]$. If, instead, we had originally found out that 
$x \not\in \{1,\ldots,k\}$, then we look for
it analogously in $C[4k+1,8k]$ or $C[8k+1,12k]$.

To look for $x_2$, we consider similarly ranges of the form $C[\ell,3k+2]$, with
identical reasoning. This time, it is possible that element $x_1$ is also 
counted as an answer, but since we already know the value of $x_1$, we simply
subtract 1 from the count in any range $C[\ell,3k+2]$ with $\ell \le x_1$.
This process continues
analogously until we identify $x_k$. The other two thirds of $\pi_i$ are 
extracted analogously from $C[12k+1,24k]$ and $C[24k+1,36k]$.

\begin{example}
Consider encoding $m=1$ permutation $\pi = (1~5~3~9~2~4~6~8~7)$, of size $3k=9$.
Then we set $\tau=1/8$ and the array $A[1,108]$ is as follows:
\begin{eqnarray*}
 1,2,3, & {-}1,{-}2,{-}3,{-4},{-}5,{-}6,& 1,5,3 \\
 4,5,6,& {-}1,{-}2,{-}3,{-4},{-}5,{-}6,& 1,5,3 \\
 7,8,9,& {-}1,{-}2,{-}3,{-4},{-}5,{-}6,& 1,5,3 \\
 1,2,3,& {-}1,{-}2,{-}3,{-4},{-}5,{-}6,& 9,2,4 \\
 4,5,6,& {-}1,{-}2,{-}3,{-4},{-}5,{-}6,& 9,2,4 \\
 7,8,9,& {-}1,{-}2,{-}3,{-4},{-}5,{-}6,& 9,2,4 \\
 1,2,3,& {-}1,{-}2,{-}3,{-4},{-}5,{-}6,& 6,8,7 \\
 4,5,6,& {-}1,{-}2,{-}3,{-4},{-}5,{-}6,& 6,8,7 \\
 7,8,9,& {-}1,{-}2,{-}3,{-4},{-}5,{-}6,& 6,8,7 \\
\end{eqnarray*}

Now we will find $x_1$ (which is 1, but we do not know it yet).
Since $A[1,10]$ has a $\tau$-majority, we know that $x_1 \in \{ 1,2,3 \}$.
Since $A[2,10]$ has no $\tau$-majority, we know that $x_2 \not\in \{ 2,3\}$,
thus we learn $x_1=1$.

Now let us find $x_2$.
Since $A[1,11]$ has one $\tau$-majority, which we know corresponds to $x_1=1$,
we conclude that $x_2 \not\in \{ 1,2,3 \}$. Thus we will have to find it
analogously in $A[13,24]$ or in $A[25,36]$.

Now let us find $x_3$.
Since $A[1,12]$ has two $\tau$-majorities, one of which we know corresponds to 
$x_1=1$, and the other we know does not correspond to $x_2$, we conclude that
$x_3 \in \{ 1,2,3\}$.
Now $A[2,12]$ has one $\tau$-majority. We know it does not correspond to
$x_1=1$ (as it falls outside the range) nor to $x_2$ (as it is not in this
subset). Then it follows that $x_3 \in \{ 2,3\}$. 
Finally, since $A[3,12]$ still has one $\tau$-majority, we conclude 
$x_3 \in \{ 3 \}$, thus $x_3=3$. 

Element $x_2$ will be found in $A[13,24]$. Elements $x_4,x_5,x_6$ will be
obtained from $A[37,72]$ and elements $x_7,x_8,x_9$ from $A[73,108]$.
\end{example}

Now, since $n=36km$ and $\tau=1/(2k+2)$, we have
that any encoding able to answer the above queries requires at least
$$m\lg \left((3k)!\right) > m\left(3k \lg (3k) - 3k \lg e + 1\right) > 
 \frac{n}{12}\left(\lg\left(\frac{3}{2}\frac{1}{\tau}-3\right) - \lg e\right)$$ bits%
\footnote{Bounding $\lg (3k)!$ with integrals one obtains
$3k \lg(3k/e) + 1 \le \lg (3k)! \le (3k+1)\lg((3k+1)/e)+1$.}. 
This is $\Omega(n\log(1/\tau))$ unless $1/\tau \le 2+\frac{2}{3}e = O(1)$, 
thus it is sufficient that we show that $\Omega(n)$ is a lower bound for any 
constant $\tau \ge 1/(2+\frac{2}{3}e) > 1/4$. 

To show that $\Omega(n)$ bits are necessary for any $\tau \ge 1/4$, consider
encoding a bitmap $B[1,m]$ in an array $A[1,4m]$ so that, if $B[i]=0$,
then $A[4(i-1)+1]=1$, $A[4(i-1)+2]=2$, $A[4(i-1)i+3]=3$, and
$A[4i]=4$. Instead, if $B[i]=1$, then $A[4(i-1)+1,4i]=1$. Then, for any
$\tau \ge 1/4$, there is a $\tau$-majority in $A[4(i-1)+1,4i]$ iff $B[i]=1$.
As there are $2^m$ possible bitmaps $B$ and our array is of length $n=4m$,
we need at least $m = n/4 = \Omega(n)$ bits for any encoding. 
Then the proof of Theorem~\ref{thm:lb} is complete.

%
%

\section{An $O((n/\tau)\log\log n)$ Bits Encoding for Range $\tau$-Majorities}
\label{sec:nlglgn}

In this section we obtain an encoding using $O((n/\tau)\log\log n)$ bits and 
solving $\tau$-majority queries in $O((1/\tau)\log n)$ time. In the next 
section we improve the space usage. We assume that $\tau$ is fixed at
construction time. At query time, we will be able to solve any
$\tau'$-majority query for any $\tau \le \tau' < 1$.

\subsection{The Basic Idea}

Consider each distinct symbol $x$ appearing in $A[1,n]$. Now consider the set
of all the segments $S_x$ within $[1,n]$ where $x$ is a $\tau$-majority (this includes,
in particular, all the segments $[k,k]$ where $A[k]=x$). Segments in $S_x$ may
overlap each other. 
Now let $A_x[1,n]$ be a bitmap such that $A_x[k]=1$ iff
position $k$ belongs to some segment in $S_x$. 
We define a second bitmap related to $x$, $M_x$, so that if $A_x[k]=1$, then
$M_x[rank(A_x,k)]=1$ iff $A[k]=x$, where operation $rank$ was defined in
Section~\ref{sec:bitmaps}. 

\begin{example} Let our running example array be
$A[1,7] = \langle 1~3~2~3~3~1~1\rangle$, and
$\tau=1/2$. Then we have the segments $S_x$:
\begin{eqnarray*}
S_1 &=& \{ [1,1], [6,6], [7,7], [6,7], [5,7] \},\\
S_2 &=& \{ [3,3] \},\\
S_3 &=& \{ [2,2], [4,4], [5,5], [4,5], [2,4], [3,5], [4,6], [2,5], [1,5],
[2,6] \},
\end{eqnarray*}
and the corresponding bitmaps $A_x$:
$$
A_1 ~=~ \langle 1~0~0~0~1~1~1 \rangle, ~~~
A_2 ~=~ \langle 0~0~1~0~0~0~0 \rangle, ~~~
A_3 ~=~ \langle 1~1~1~1~1~1~0 \rangle.
$$
Finally, the corresponding bitmaps $M_x$ are:
$$
M_1 ~=~ \langle 1~0~1~1 \rangle, ~~~
M_2 ~=~ \langle 1 \rangle, ~~~
M_3 ~=~ \langle 0~1~0~1~1~0 \rangle.
$$
\vspace*{-5mm}
\end{example}

Then, the following result is not difficult to prove.

\begin{lemma} \label{lem:Ax}
An element $x$ is a $\tau'$-majority in $A[i,j]$ iff $A_x[k]=1$ for all 
$i\le k\le j$, and 1 is a $\tau'$-majority in $M_x[rank(A_x,i),rank(A_x,j)]$.
\end{lemma}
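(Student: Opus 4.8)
The plan is to prove the biconditional in Lemma~\ref{lem:Ax} by unpacking the two definitions ($A_x$ and $M_x$) and showing each direction carefully. The key observation is that $A_x[k]=1$ exactly when position $k$ lies in some segment where $x$ is a $\tau$-majority, and that $M_x$ is a ``compressed'' copy of $A$ restricted to the positions marked in $A_x$: within the subsequence of positions where $A_x=1$, the bit $M_x$ records whether the original array value equals $x$. Thus counting 1s in a range of $M_x$ corresponds exactly to counting occurrences of $x$ among the $A_x$-marked positions of the corresponding range of $A$.

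First I would handle the forward direction. Suppose $x$ is a $\tau'$-majority in $A[i,j]$. Since $\tau'\ge\tau$, $x$ is in particular a $\tau$-majority in $A[i,j]$, so $[i,j]\in S_x$, which means every position $i\le k\le j$ belongs to a segment of $S_x$ and hence $A_x[k]=1$ for all such $k$. This establishes the first conjunct. For the second conjunct, because $A_x[k]=1$ throughout $[i,j]$, the map $k\mapsto rank(A_x,k)$ sends $[i,j]$ bijectively onto the contiguous range $[rank(A_x,i),rank(A_x,j)]$ in $M_x$, and by definition $M_x[rank(A_x,k)]=1$ iff $A[k]=x$. Hence the number of 1s in $M_x[rank(A_x,i),rank(A_x,j)]$ equals the number of occurrences of $x$ in $A[i,j]$, and the length of that $M_x$-range equals $j-i+1$; so $x$ being a $\tau'$-majority in $A[i,j]$ translates directly into 1 being a $\tau'$-majority in the $M_x$-range.

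For the reverse direction I would argue the contrapositive-friendly way: assume both right-hand conditions hold. The condition $A_x[k]=1$ for all $i\le k\le j$ again makes $k\mapsto rank(A_x,k)$ a bijection from $[i,j]$ onto the contiguous range $[rank(A_x,i),rank(A_x,j)]$, so the length of the $M_x$-range is exactly $j-i+1$ and its 1s correspond precisely to the positions $k\in[i,j]$ with $A[k]=x$. Therefore ``1 is a $\tau'$-majority in $M_x[rank(A_x,i),rank(A_x,j)]$'' says that $x$ occurs more than $\tau'(j-i+1)$ times in $A[i,j]$, which is exactly the statement that $x$ is a $\tau'$-majority in $A[i,j]$.

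The only real subtlety—the step I expect to need the most care—is verifying that the requirement ``$A_x[k]=1$ for all $i\le k\le j$'' is genuinely necessary, i.e.\ that it cannot be dropped. The point is that if some position in $[i,j]$ has $A_x=0$, then the image of $[i,j]$ under $rank(A_x,\cdot)$ is still a contiguous $M_x$-range, but its length is strictly smaller than $j-i+1$; counting 1s there would compare occurrences of $x$ against the wrong denominator, potentially mislabelling $x$ as a majority. Making explicit that the contiguity of the $rank$ image and the exact length $j-i+1$ both hinge on the first conjunct is what ties the two conditions together, and it is the detail on which the equivalence truly rests.
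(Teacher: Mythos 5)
Your proposal is correct and follows essentially the same route as the paper's own proof: both directions reduce to the observation that once $A_x$ is all 1s on $[i,j]$ (which follows from $[i,j]\in S_x$ in the forward direction), the $rank$ map sends $A[i,j]$ bijectively onto a contiguous $M_x$-range of the same length whose 1s correspond exactly to occurrences of $x$. The extra care you take about why the first conjunct cannot be dropped is a nice clarification but does not change the argument.
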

\begin{proof}
If $x$ is a $\tau'$-majority in $A[i,j]$, then it is also a $\tau$-majority.
Thus, by definition, $[i,j] \in S_x$, and therefore all the
positions $k \in [i,j]$ are set to 1 in $A_x$. Therefore, the whole segment
$A_x[i,j]$ is mapped bijectively to $M_x[rank(A_x,i),rank(A_x,j)]$, which is
of the same length. Finally, the number of occurrences of $x$ in $A[i,j]$ is 
the number of occurrences of 1 in $M_x[rank(A_x,i),$ $rank(A_x,j)]$, which 
establishes the result. 

Conversely, if $A_x[k]=1$ for all $i\le k\le j$, then $A[i,j]$ is bijectively
mapped to $M_x[rank(A_x,i),rank(A_x,j)]$, and the 1s in this range correspond
one to one with occurrences of $x$ in $A[i,j]$. Therefore, if 1 is a 
$\tau'$-majority in $M_x[rank(A_x,i),$ $rank(A_x,j)]$, then $x$ is a 
$\tau'$-majority in $A[i,j]$.
\end{proof}

\begin{example}
Value 1 is a majority in $A[5,7]$, and it holds that $A_1[5,7] = \langle
1~1~1 \rangle$ and $M_1[rank(A_1,5),rank(A_1,7)] = M_1[2,4] = \langle 0~1~1
\rangle$, where 1 is a majority.
\end{example}

Thus, with $A_x$ and $M_x$ we can determine whether $x$ is
a majority in a range.

\begin{lemma} \label{lem:algox}
It is sufficient to have $rank$-enabled bitmaps $A_x$ and $M_x$ to determine, 
in constant time, whether $x$ is a $\tau'$-majority in any $A[i,j]$.
\end{lemma}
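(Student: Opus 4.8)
The plan is to build on Lemma~\ref{lem:Ax}, which already characterizes $\tau'$-majorities in terms of the two bitmaps, and simply show that the conditions stated there can each be checked in constant time using only $rank$ (and $select$) operations on $A_x$ and $M_x$. The statement to prove, Lemma~\ref{lem:algox}, is essentially an algorithmic restatement of Lemma~\ref{lem:Ax}, so the work is in verifying that every ingredient of that characterization is computable in $O(1)$ time.

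First I would recall the two conditions from Lemma~\ref{lem:Ax}: (i) $A_x[k]=1$ for all $i \le k \le j$, and (ii) the bit $1$ is a $\tau'$-majority in the mapped range $M_x[rank(A_x,i),rank(A_x,j)]$. For condition (i), I would observe that $A_x[i..j]$ is an all-ones block exactly when the number of $1$s in that range equals its length, i.e. $rank(A_x,j)-rank(A_x,i-1) = j-i+1$; this is a single subtraction after two constant-time $rank$ queries. For condition (ii), once we set $p = rank(A_x,i)$ and $q = rank(A_x,j)$, we must count the $1$s in $M_x[p,q]$ and compare against the majority threshold. The count of $1$s is $rank(M_x,q)-rank(M_x,p-1)$, the length of the range is $q-p+1$, and $x$ is a $\tau'$-majority precisely when this count exceeds $\tau'(q-p+1)$. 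All of these are constant-time arithmetic on the outputs of $O(1)$-time $rank$ queries.

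The one subtlety I would flag is the ordering of the two tests: condition (ii) is only meaningful after condition (i) has been confirmed, since Lemma~\ref{lem:Ax} guarantees the bijective correspondence between $A[i,j]$ and $M_x[p,q]$ only when the whole block $A_x[i,j]$ is set. So the algorithm first checks (i), and only if it holds does it proceed to evaluate (ii). I would make explicit that if (i) fails we immediately report that $x$ is not a $\tau'$-majority, which is correct because (i) is a necessary condition in the forward direction of Lemma~\ref{lem:Ax}.

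The main obstacle here is not really an obstacle but rather a matter of being careful that no hidden nonconstant-time operation sneaks in: in particular, computing $rank(A_x,i)$ and $rank(A_x,j)$ must be $O(1)$, which is exactly the ``rank-enabled'' hypothesis in the lemma statement, and similarly for $M_x$. Given those hypotheses, the proof reduces to assembling the four $rank$ evaluations, two length computations, and the threshold comparison, and invoking Lemma~\ref{lem:Ax} for correctness. I expect the proof to be short, citing Section~\ref{sec:bitmaps} for the constant-time $rank$ support and Lemma~\ref{lem:Ax} for correctness.
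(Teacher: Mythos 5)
Your proof is correct and follows essentially the same route as the paper's: invoke Lemma~\ref{lem:Ax}, test the all-ones condition on $A_x[i,j]$ with two constant-time $rank$ queries, and then test whether $1$ is a $\tau'$-majority in the mapped range of $M_x$ with two more $rank$ queries and a threshold comparison. The only cosmetic difference is that you verify the block via $rank(A_x,j)-rank(A_x,i-1)=j-i+1$, whereas the paper compares $j'-i'=rank(A_x,j)-rank(A_x,i)$ against $j-i$.
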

\begin{proof}
We use Lemma~\ref{lem:Ax}. We compute $i'=rank(A_x,i)$ and $j'=rank(A_x,j)$. 
If $j'-i' \not= j-i$, then $A_x[k]=0$ for some $i \le k \le j$ and thus $x$
is not a $\tau$-majority in $A[i,j]$, hence it is also not a $\tau'$-majority. 
Otherwise, we find out whether 1 is a $\tau'$-majority
in $M_x[i',j']$, by checking whether $rank(M_x,j')-rank(M_x,i'-1)>\tau'(j'-i'+1)$. 
\end{proof}

To find any position $i \le k \le j$ where $A[k]=x$, we need the
operation $select(B,j)$, defined in Section~\ref{sec:bitmaps}.
Then, for example, if $x$ is a $\tau'$-majority in $A[i,j]$, its leftmost 
occurrence in $A[i,j]$ is $i-i'+select(M_x,rank(M_x,i'-1)+1)$. In general,
for any $1 \le t \le rank(M_x,j')-rank(M_x,i'-1)$, we can retrieve the $t$th 
occurrence with $i-i'+select(M_x,rank(M_x,i'-1)+t)$.

\subsection{Coalescing the Bitmaps}

We cannot afford to store (and probe!) all the bitmaps $A_x$ and $M_x$ for all $x$,
however. The next lemma is the first step to reduce the total space to 
slightly superlinear.

\begin{lemma} \label{lem:five1s}
For any position $A[k]=x$ there are at most $2\lceil 1/\tau\rceil$ 1s in $A_x$.
\end{lemma}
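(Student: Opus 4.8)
The plan is to read off from the construction that the 1s of $A_x$ are exactly the positions covered by at least one segment of $S_x$, and then to bound their number by $2\lceil 1/\tau\rceil$ per occurrence of $x$; summed over all symbols this will give the $O(n/\tau)$ total needed later. So I restate the goal as: if $x$ occurs $c$ times in $A$, then $A_x$ has at most $2\lceil 1/\tau\rceil\,c$ ones, and I obtain this by charging each covered position to an occurrence so that no occurrence is charged more than $2\lceil 1/\tau\rceil$ times (roughly $\lceil 1/\tau\rceil$ to its left and $\lceil 1/\tau\rceil$ to its right).

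First I would record the basic structural fact. Let $p_1<\dots<p_c$ be the occurrences of $x$. A position $m$ carries a 1 in $A_x$ iff some segment $[a,b]\ni m$ contains more than $\tau(b-a+1)$ occurrences of $x$. For such an $m$ take a \emph{shortest} witness $[a,b]$: shrinking either end that is not an occurrence keeps the count but lowers the length, so both endpoints of a shortest witness are occurrences of $x$, unless the end coincides with $m$ itself. Writing $c'$ for the number of occurrences inside the witness, the majority condition $c'>\tau(b-a+1)$ gives the length bound $b-a+1<c'/\tau$. This is the only quantitative input I need: every covered position lives in a short, occurrence-dense window.

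Next, the charging. I would classify a covered non-occurrence $m$ by where it sits relative to the occurrences of its shortest witness: a \emph{right overhang} lies to the right of all of them (its witness then has the form $[p_s,m]$), a \emph{left overhang} is symmetric, and otherwise $m$ is wedged strictly between two witness occurrences. The occurrences themselves contribute exactly $c$ ones and are charged to themselves. For the right overhangs I would sweep the occurrences left to right and feed each newly covered overhang position into the budget of the occurrences that make it a majority, using $b-a+1<c'/\tau$ to argue that the total rightward reach supported by any block of occurrences stays below $c'/\tau$, so it can be distributed giving each occurrence at most $\lceil 1/\tau\rceil$ rightward charges; the left overhangs are handled by the mirror sweep, and the wedged positions fall inside windows already paid for from one side. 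A useful sanity check for the constant is the isolated case: if $x$ occurs once in a neighbourhood, its witnesses contain only that occurrence, hence have length below $1/\tau$ and cover at most $2\lceil 1/\tau\rceil-3$ positions, comfortably within budget.

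The main obstacle is exactly this distribution step, because a single covered position can be far from \emph{every} occurrence: when occurrences cluster, a majority segment can bridge a wide gap or overhang well beyond the last occurrence (e.g.\ $c$ consecutive occurrences make positions up to distance about $c(1/\tau-1)$ covered). Thus a far position's 1 cannot be blamed on one nearby occurrence; its cost must be spread over the whole supporting block, and the delicate point is to make this spreading consistent across the different shortest witnesses of different positions, so that no occurrence's per-side budget of $\lceil 1/\tau\rceil$ is ever exceeded. The length bound $b-a+1<c'/\tau$ is what forces the reaches to telescope, and verifying that the amortized sweep never overdraws an occurrence's budget is where the real work lies.
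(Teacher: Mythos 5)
There is a genuine gap: the charging step, which is the entire content of the lemma, is never actually carried out. Your structural setup is sound --- a shortest witness $[a,b]$ of a covered position has occurrences of $x$ at both ends (or at the position itself) and satisfies $b-a+1<c'/\tau$ --- but from there you only describe a plan for the sweep, and you yourself flag that ``verifying that the amortized sweep never overdraws an occurrence's budget is where the real work lies.'' That verification is not routine, and the per-witness length bound alone does not deliver it. For the positions you call wedged, the gap between two consecutive occurrences $p_u<p_{u+1}$ inside a covered run can be covered by a long spanning witness and can by itself have length close to $q(1/\tau-1)$, where $q$ is the number of occurrences in that witness (e.g.\ occurrences at $1,\dots,q-1$ and at $\lceil q/\tau\rceil-1$ make the whole interval a majority segment). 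So a single gap cannot be paid for by its two neighbouring occurrences, and bounding each gap independently by its witness length and summing over the $q-1$ gaps of a run gives only $O(q^2/\tau)$. Your assertion that the wedged positions ``fall inside windows already paid for from one side'' is precisely the statement that needs proof, and nothing in the proposal establishes that the budgets of the supporting occurrences are not spent several times over by different witnesses.

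The paper avoids this difficulty with an incremental argument rather than a static charging: insert the occurrences of $x$ one at a time and bound the number of \emph{new} 1s created by each insertion. The key fact is that a maximal already-covered segment $[l,r]$ with $c$ occurrences satisfies $c\le\tau(r-l+2)$ (otherwise it could already be extended and would not be maximal), so after adding one more occurrence the segment can grow by only $k_l+k_r<1+1/\tau$ cells in total; combined with the $2\lceil 1/\tau\rceil-3$ cells that an isolated new occurrence creates on its own, this yields the $2\lceil 1/\tau\rceil$ bound per occurrence directly, with no need to consistently apportion charges among the many overlapping shortest witnesses of many positions. If you want to persist with the static approach, you would essentially have to reprove this maximality/telescoping fact inside your sweep; as written, the proposal identifies the right quantities but does not prove the bound.
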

\begin{proof}
Consider a process where we start with $A[k]=\,\perp$ for all $k$, and set the 
values $A[k]=x$ progressively. We will distinguish three kinds of changes.

\paragraph{{(1) New segments around $A[k]$ are created in $S_x$}}

Setting $A[k]=x$ creates in $S_x$ all the segments of the form
$[k-k_l,k+k_r]$ for $1>\tau(k_r+k_l+1)$, or $k_l+k_r < 1/\tau-1$. Their
union is the area
$A_x[k-\lceil 1/\tau\rceil+2,\ldots, k+\lceil 1/\tau\rceil-2]=1$, which may
increase the number of 1s in $A_x$ by up to $2\lceil 1/\tau\rceil-3$.

\paragraph{{(2) Segments already covering $A[k]$ are extended}}

Any maximal segment $[l,r] \in S_x$ covering $A_x[k]$ contains $c > \tau(r-l+1)$
occurrences of $x$, but it holds that $c \le \tau(r-l+2)$, otherwise there would
also exist segments $[l-1,r]$ and $[l,r+1]$ in $S_x$, and $[l,r]$ would not be
maximal. Therefore, adding one more occurrence, $A[k]=1$, we get
$c+1 \le \tau(r-l+2+1/\tau)$ occurrences in $[l,r]$. Now it holds that $x$
may be a $\tau$-majority in segments $[l-k_l,r+k_r]$ for all $0 \le k_l+k_r <
1+1/\tau$ (i.e., where $c+1 > \tau(r-l+1+k_l+k_r)$, using only that
$c+1 \le \tau(r-l+2+1/\tau)$), 
and therefore we can extend $[l,r]$ to the left by up to
$\lceil 1/\tau\rceil$, or to the right by up to $\lceil 1/\tau\rceil$.

\paragraph{{(3) Segments reaching close to $A[k]$ are extended}}

The same reasoning as for the previous case applies, even if $[l,r]$ does not
originally contain position $k$. There are more restrictions, since now
$[l-k_l,r+k_r]$ must be so that it contains $k$, and the same limit
$0 \le k_l+k_r < 1+1/\tau$ applies. Thus, in addition to being possible to
extend them by at most $\lceil 1/\tau\rceil$ cells in either direction,
position $k$ must lie within the extended area.

\paragraph{Total extension} The three cases above are superimposed. Let
$\ell_l$ and $\ell_r$ the closest positions $\ell_l \le k \le \ell_r$ where
$A_x[\ell_l] = A_x[\ell_r] = 1$. Then, if $\ell_l=k$, we can set at most
$\lceil 1/\tau\rceil$ new 1s in $A_x$ to the left of $k$ by extending segments
using case
(2). Otherwise, if $k-\ell_l \le \lceil 1/\tau\rceil$, we can cover the area
$A_x[\ell_l+1,\ldots,k]$ and add up to $\lceil 1/\tau\rceil-(k-\ell_l)$
further cells to the left, using case (3). Otherwise, if $k-\ell_l > \lceil
1/\tau\rceil$, we set $\lceil 1/\tau\rceil-2$ cells to the left, apart from
$k$, using case (1). The same reasoning applies to the right, and therefore $2 \lceil
1/\tau\rceil$ is an upper bound to the number of 1s in $A_x$ produced by each
new occurrence of $x$ in $A$.
\end{proof}

The lemma shows that all the $A_x$ bitmaps add up to $O(n/\tau)$ 1s, and thus the 
lengths of all the $M_x$ bitmaps add up to $O(n/\tau)$ as well (recall that $M_x$ 
has one position per 1 in $A_x$). Therefore, we can store all the $M_x$ bitmaps
within $O(n/\tau)$ bits of space. We cannot, however, store all the $A_x$ 
bitmaps, as they may add up to $O(n^2)$ 0s (note there can be $O(n)$
distinct symbols $x$), and we still cannot probe all the $A_x$ bitmaps for all
$x$ in $o(n)$ time.

Instead, we will {\em coalesce} all the bitmaps $A_x$ into a smaller number of
bitmaps $A'_r$ (which will be called coalesced bitmaps). Coalescing works as
follows. Let us write $A[i,j]=b$ to mean $A[\ell]=b$ for all $i \le \ell \le j$.
We start with all $A'_r[1,n]=0$ for all $r$. Then we take each maximal area of 
all 1s of each bitmap, $A_x[i,j]=1$, choose some $r$ such that 
$A'_r[i-1,j+1]=0$, and set 
$A'_r[i,j]=1$. That is, we copy the run of 1s from $A_x$ to some coalesced
bitmap $A'_r$ such that the run does not overlap nor touch other previous runs
already copied (i.e., there must be at least one 0 between any two copied runs
of 1s). We associate to each such $A'_r$ a bitmap $M'_r$ where the areas
of each $M_x$ corresponding to each coalesced area of $A_x$ are concatenated,
in the same order of the coalesced areas. That is, if $A'_r[i_t,j_t]=1$,
the $t$th left-to-right run of 1s in $A'_r$, was copied from $A_x$, then 
$M_x[rank(A_x,i_t),rank(A_x,j_t)]$ will be the $t$th segment appended to $M'_r$.

\begin{example}
We can coalesce the whole bitmaps $A_1$ and $A_2$ into $A' = \langle
1~0~1~0~1~1~1 \rangle$, with the corresponding bitmap $M' = \langle 1~1~0~1~1
\rangle$.
\end{example}

The coalesced bitmaps $A'_r$ and $M'_r$ will replace the original bitmaps
$A_x$ and $M_x$.
At query time, we check for the area $[i,j]$ of each coalesced bitmap 
using Lemma~\ref{lem:algox}. 
We cannot confuse the areas of different symbols $x$ because
we force that there is at least one 0 between any two areas. We cannot report
the same $\tau'$-majority $x$ in more than one coalesced bitmap, 
as both areas should overlap on
$[i,j]$ and then they would have been merged as a single area in $A_x$.
If we find one $\tau'$-majority in one coalesced bitmap, we know that there 
is a $\tau'$-majority $x$ and can spot all of its occurrences (or the leftmost,
if desired) in optimal time, even if we cannot know the
identity of $x$. Moreover, we will find all the distinct $\tau'$-majorities in 
this way.

\subsection{Bounding the Number of Coalesced Bitmaps}

This scheme will work well if we obtain just a few coalesced bitmaps overall.
Next we show how to obtain only $O((1/\tau)\log n)$ coalesced bitmaps.

\begin{lemma} \label{lem:2lgn}
At most $2\log_{1+\tau} n$ distinct values of $x$ can have $A_x[k]=1$ for a 
given $k$.
\end{lemma}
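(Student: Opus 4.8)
The plan is to reduce the two-sided statement to a one-sided one, losing only a factor of two. Fix the position $k$. If $A_x[k]=1$ then $x$ is a $\tau$-majority of some range $[i,j]$ with $i\le k\le j$. Splitting this range at $k$ into the suffix $[i,k]$ and the prefix $[k+1,j]$, and writing $c_L,c_R$ for the number of occurrences of $x$ in each part, the inequality $c_L+c_R>\tau((k-i+1)+(j-k))$ forces $c_L>\tau(k-i+1)$ or $c_R>\tau(j-k)$. Hence every $x$ with $A_x[k]=1$ is a $\tau$-majority either of some range ending at $k$ or of some range starting at $k+1$. It therefore suffices to bound, by $\log_{1+\tau}n$, the number of distinct values that are $\tau$-majorities of \emph{some suffix ending at a fixed position $p$}; the symmetric bound for prefixes and the union then give $2\log_{1+\tau}n$.

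For the one-sided bound I would set up a counting argument over nested windows. For each value $x$ in this set let $[p-L_x+1,p]$ be the \emph{longest} suffix of which $x$ is a $\tau$-majority, so $x$ occurs $c_x>\tau L_x$ times in it, and sort the values as $x_1,\dots,x_m$ with $L_1\le L_2\le\cdots\le L_m\le n$. The key observation is that for every $j$ the window $W_j=[p-L_j+1,p]$ of length $L_j$ contains the longest suffix of each $x_i$ with $i\le j$; since these values are distinct, the sum of their occurrence counts inside $W_j$ is at most $|W_j|=L_j$, which yields $\tau\sum_{i\le j}L_i<\sum_{i\le j}c_i\le L_j$. Writing $S_j=\sum_{i\le j}L_i$, this says $S_j(1-\tau)>S_{j-1}$, so the partial sums grow geometrically, $S_j>S_{j-1}/(1-\tau)\ge(1+\tau)S_{j-1}$, and thus $S_m>(1+\tau)^{m-1}S_1\ge(1+\tau)^{m-1}$. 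Combining with $\tau S_m<L_m\le n$ bounds $m$ by $O(\log_{1+\tau}n)$, as required.

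The main obstacle I anticipate is pinning down the \emph{exact} constant claimed, namely base $1+\tau$ with no spurious lower-order slack: the naive chain above bounds $m-1$ by $\log_{1+\tau}(n/\tau)$, and although the step $1/(1-\tau)\ge 1+\tau$ improves the base for free, the stray $\log(1/\tau)$ coming from turning $S_m$ into $n$ has to be absorbed (using $1/\tau\le n$) or removed by a more careful upper bound on $S_m$. Two secondary technical points must also be handled: ties among the $L_i$, which are harmless for the counting but rule out any attempt to argue per-step geometric growth of the individual $L_i$ (indeed, an all-distinct array shows the minimal-suffix lengths can be $1,2,3,\dots$, so the geometry lives only in the partial sums $S_j$); and the single value straddling position $k$ that the suffix/prefix split might count on both sides, which only weakens the count and so is harmless. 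None of these affects the $O(\log_{1+\tau}n)$ order, so the crux is purely the bookkeeping needed to reach the stated clean constant.
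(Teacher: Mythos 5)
Your argument is correct and yields the bound the paper actually needs ($O(\log_{1+\tau}n)$ distinct values, hence $O((1/\tau)\log n)$ coalesced bitmaps), and your split at position $k$ into suffix- and prefix-majorities is exactly the paper's reduction to ``left-majorities'' and ``right-majorities.'' Where you diverge is in the one-sided bound: the paper sorts the left-majorities by the starting points of their witnessing windows and runs an induction on \emph{occurrence counts}, showing the $r$th one must occur more than $\tau(k-i_{r-1})\ge(1+\tau)^r-1$ times because it has to outweigh the $\frac{(1+\tau)^r-1}{\tau}$ occurrences of its predecessors already packed into the enclosing window; since counts are bounded by $n$, this gives $\log_{1+\tau}n$ per side directly. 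Your geometric growth of the partial sums $S_j$ of window \emph{lengths} is the dual view and equally valid, but, as you correctly flag, it terminates with $\tau S_m<n$ rather than with a quantity bounded by $n$, leaving a stray $\log_{1+\tau}(1/\tau)$ that must be absorbed via $1/\tau\le n$ at the cost of the constant (the lemma is trivial when $1/\tau>n$, so this is legitimate but loses the stated factor $2$). The clean repair is precisely to run your own packing observation on counts instead of lengths: set $T_j=\sum_{i\le j}c_i$; distinctness gives $T_j\le L_j\le n$, and $c_j>\tau L_j\ge\tau L_{j-1}\ge\tau T_{j-1}$, so $T_j>(1+\tau)T_{j-1}$ and $(1+\tau)^{m-1}<T_m\le n$, recovering $\log_{1+\tau}n$ per side with no slack. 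That count-based recursion is, in essence, the paper's proof, so the two arguments differ only in which of the two dual quantities carries the geometric growth.
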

\begin{proof}
First, $A[k]=x$ is a $\tau$-majority in $A[k,k]$, thus $A_x[k]=1$. 
Now consider any other element $x' \not=x$ such that $A_{x'}[k]=1$. This means
that $x'$ is a $\tau$-majority in some $[i,j]$ that contains $k$. Since 
$A[k]\not=x'$, it must be that $x'$ is a $\tau$-majority in $[i,k-1]$ or in 
$[k+1,j]$ (or in both). We
say $x'$ is a left-majority in the first case and a right-majority in the
second. Let us call $y_1, y_2, \ldots$ the $x'$ values that are
left-majorities, and $i_1, i_2, \ldots$ the starting points of their segments
(if they are $\tau$-majorities in several segments covering $k$, we choose one
arbitrarily). Similarly, let $z_1, z_2, \ldots$ be the $x'$ values that are 
right-majorities, and $j_1, j_2, \ldots$ the ending points of their segments.
Assume the left-majorities are sorted by decreasing values of $i_r$ and the
right-majorities are sorted by increasing values of $j_r$.
If a same value $x'$ appears in both lists, we arbitrarily remove one of them.
As an exception, we will start both lists with $y_0 = z_0 = x$, with
$i_0=j_0=k$.

It is easy to see by induction that $y_r$ must appear at least $(1+\tau)^r$ 
times in the interval $[i_r,k]$ (or in $[i_r,k-1]$, which is the same). 
This clearly holds for $y_0 = x$. Now, by the
inductive hypothesis, values $y_0, y_1, \ldots, y_{r-1}$ appear at least 
$(1+\tau)^0, (1+\tau)^1,\ldots,(1+\tau)^{r-1}$ times within $[i_{r-1},k-1]$ 
(which contains all the intervals), adding up to $\frac{(1+\tau)^r-1}{\tau}$ 
occurrences. Thus $k-1-i_{r-1}+1 \ge \frac{(1+\tau)^r-1}{\tau}$. 
In order to be a left-majority, element $y_r$ must appear strictly more than 
$\tau(k-i_{r-1}) \ge (1+\tau)^r-1$ times in $[i_r,k-1]$, to outweight all the
occurrences of the previous symbols. The case of right-majorities
is analogous.
This shows that there cannot be more than $\log_{1+\tau} n$ left-majorities 
and $\log_{1+\tau} n$ right-majorities.
\end{proof}

In the following it will be useful to define $C_x$ as the set of maximal 
contiguous 
areas of 1s in $A_x$. That is, $C_x$ is obtained by merging all the segments 
of $S_x$ that touch or overlap. 
Note that segments of $C_x$ do
not overlap, unlike those of $S_x$. Since a segment of $C_x$ covers a position 
$k$ iff some segment of $S_x$ covers position $k$ (and iff $A_x[k]=1$), it 
follows by Lemma~\ref{lem:2lgn} that any position is 
covered by at most $2\log_{1+\tau} n$ segments of $C_x$ of distinct symbols $x$.

Note that a pair of consecutive positions $A[k]=x$ and $A[k+1]=y$ is also 
covered by at most $2\log_{1+\tau} n$ such segments: the right-majorities for
$A[k]$ either are $y$ or are also right-majorities for $A[k+1]$, and those are
already among the $\log_{1+\tau} n$ right-majorities of $A[k+1]$. And vice
versa. 

We obtain $O(\log_{1+\tau} n)$ coalesced bitmaps as follows.
We take the union of all the sets $C_x$ of all the symbols $x$ and sort the
segments by their starting points. Then we
start filling coalesced bitmaps. We check if the current segment can be 
added to an existing bitmap without producing overlaps (and leaving a 0 in
between). If we can, we choose any appropriate bitmap, otherwise we start a 
new bitmap.
If at some point we need more than $2\log_{1+\tau} n$ bitmaps, it is because all
the last segments of the current $2\log_{1+\tau} n$ bitmaps overlap either the 
starting point of the current segment or the previous position, a contradiction.

\begin{example}
We have $C_1 = \{ [1,1], [5,7]\}$, $C_2 = \{ [3,3] \}$, and $C_3 = \{ [1,6] \}$.
Now, we take $C_1 \cup C_2 \cup C_3 = \{ [1,1], [1,6], [3,3],
[5,7] \}$, and the process produces precisely the coalesced bitmaps
$A'$, corresponding to the set $\{ [1,1], [3,3], [5,7] \}$, and $A_3$,
corresponding to $\{ [1,6] \}$. 
\end{example}

Note that in general the coalesced bitmaps
may not correspond to the union of complete original bitmaps $A_x$, but
areas of a bitmap $A_x$ may end up in different coalesced bitmaps.

Therefore, the coalescing process produces $O(\log_{1+\tau} n) =
O((1/\tau)\log n)$ bitmaps.
Consequently, we obtain $O((1/\tau)\log n)$ query time by simply
checking the coalesced bitmaps one by one using Lemma~\ref{lem:algox}.

Finally, representing the $O((1/\tau)\log n)$ coalesced bitmaps $A'$, which 
have total length $O((n/\tau)\log n)$ and contain $O(n/\tau)$ 1s, requires 
$O((n/\tau)\log\log n)$ bits if we use a compressed bitmap representation 
\cite{RRR07} that still offers constant-time $rank$ and $select$ queries
(recall Section~\ref{sec:bitmaps}). The coalesced bitmaps $M'$ still have
total length $O(n/\tau)$.

This completes the first part of our result. Next, we will reduce the space
usage of our encoding.

\section{Reducing the Space to $O(n\log(1/\tau))$ Bits}
\label{sec:nbits}

We introduce a different representation of the coalesced bitmaps that allows
us to store them in $O(n\log(1/\tau))$ bits, while retaining the same mechanism
described above.
We note that, although there can be $O(n/\tau)$ bits set in the bitmaps
$A_x$, each new element $x$ produces at most one new {\em run} of
contiguous 1s (case (1) in the proof of Lemma~\ref{lem:five1s}). Therefore 
there are at most $n$ runs in total. We will use a representation of coalesced 
bitmaps that takes advantage of these runs.

We will distinguish segments of $C_x$ by their lengths, separating lengths by
ranges between $\lceil 2^\ell/\tau \rceil$ and $\lceil
2^{\ell+1}/\tau\rceil -1$, for any {\em level} $0 \le \ell \le \lg(\tau n)$
(level $0$ is special in that it contains lengths starting from 1). 
In the process of creating the coalesced bitmaps described in 
the previous section, we will have separate coalesced bitmaps for inserting 
segments within each range of lengths; these will be called bitmaps of 
level $\ell$. There may be several bitmaps of the same level. It is important 
that, even with this restriction, our coalescing process will still generate 
$O((1/\tau)\log n)$ bitmaps, because only $O(1/\tau)$ coalesced bitmaps of each 
level $\ell$ will be generated.

\begin{lemma} \label{lem:8lev}
There can be at most $4/\tau$ segments of any $C_x$, of length between 
$\lceil 2^\ell/\tau \rceil$ and $\lceil 2^{\ell+1}/\tau \rceil -1$, 
covering a given position $k$, for any $\ell$.
\end{lemma}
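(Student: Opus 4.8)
The plan is to bound the number of qualifying segments by a counting argument that localizes occurrences in a bounded window around $k$. Fix a level $\ell$ and let $x_1,\dots,x_t$ be the distinct symbols, each having a segment of $C_{x_i}$ that covers $k$ and has length in $[\lceil 2^\ell/\tau\rceil,\lceil 2^{\ell+1}/\tau\rceil-1]$. The level's upper bound on the length confines every such segment to a window $W$ around $k$ of width $O(2^\ell/\tau)$, since a segment of length $L<2^{\ell+1}/\tau$ that covers $k$ can extend at most $L-1<2^{\ell+1}/\tau$ positions to either side. The heart of the argument is a complementary \emph{lower} bound: each such segment must contain $\Omega(2^\ell)$ occurrences of its own symbol $x_i$. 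Granting this, I would finish as follows: the occurrences of the distinct symbols $x_1,\dots,x_t$ are pairwise disjoint and all lie inside $W$, so $t\cdot\Omega(2^\ell)$ is at most $|W|=O(2^\ell/\tau)$, whence $t=O(1/\tau)$, and tracking the constants yields the claimed $4/\tau$.

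The main obstacle is the occurrence lower bound, and it is genuinely delicate because a maximal run in $C_x$ need \emph{not} have $x$ as a $\tau$-majority of the whole run: the run is only a \emph{union} of overlapping $\tau$-majority segments of $S_x$, and the overlaps can let the global density of $x$ drop below $\tau$. So I cannot simply invoke $\mathit{occ}>\tau L$ on the whole segment. Instead I would cover the run $[a,b]$ by a chain of $S_x$-segments (each a genuine $\tau$-majority segment), chosen greedily by furthest right reach, and then extract the odd-indexed members of the chain, which are pairwise disjoint. Summing the majority guarantee $\mathit{occ}(I)>\tau|I|$ over this disjoint subfamily, whose total length is at least $L/2$ because the chain covers $[a,b]$, gives $\mathit{occ}(x\text{ in }[a,b])>\tau L/2\ge 2^{\ell-1}$. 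This is the $\Omega(2^\ell)$ bound needed above.

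The remaining work is constant optimization, which I expect to be the fiddly part of pinning down exactly $4/\tau$ rather than a larger $O(1/\tau)$. The crude version above (window of width $2^{\ell+2}/\tau$ together with $\mathit{occ}>2^{\ell-1}$) gives a bound of the form $8/\tau$; to reach $4/\tau$ I would split the segments into those that extend at least as far left of $k$ as right (\emph{left-heavy}) and the rest (\emph{right-heavy}), bound each class by $2/\tau$ using a one-sided window of width $<2^{\ell+1}/\tau$ and the precise $\lceil\cdot\rceil$ length bounds of the level, and add the two counts. I would also treat level $0$ separately, as in the paper's definition it starts at length $1$ and the disjoint-subsegment bound degrades for the very short runs; there the cruder direct count already suffices. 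The conceptual content, however, is entirely in the run-occurrence lower bound of the preceding paragraph, which is where the non-majority-of-the-whole-run subtlety must be overcome; the rest is arithmetic over the level boundaries.
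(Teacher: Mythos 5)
Your skeleton is the same as the paper's: confine every qualifying segment to a window of width less than $2^{\ell+2}/\tau$ around $k$, lower-bound the number of occurrences of each segment's own symbol inside the segment, and use disjointness of occurrences of distinct symbols to divide. Where you differ is precisely the step you flag as delicate, and there you are \emph{more} careful than the paper. The paper's proof simply asserts that $x$ ``must appear more than $\tau\lceil 2^\ell/\tau\rceil \ge 2^\ell$ times'' in the segment, implicitly treating the $C_x$-segment as a range in which $x$ is a $\tau$-majority. As you observe, a maximal run of $C_x$ is only a union of overlapping $S_x$-segments and $x$ need not be a $\tau$-majority of the whole run: for $\tau=1/2$ and $A=\langle y,x,x,y\rangle$ the run $[1,4]$ has length $4$ (level $1$) but only $2=\tau\cdot 4$ occurrences of $x$, so the paper's stated bound fails there. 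Your repair --- take a minimal chain of $S_x$-segments covering the run, note that the odd-indexed and even-indexed members each form a disjoint family, and sum the majority guarantee over the larger family --- correctly yields $occ > \tau L/2 \ge 2^{\ell-1}$, and hence a bound of $8/\tau$. Since the lemma is only ever used to conclude that $O(1/\tau)$ coalesced bitmaps per level suffice, the factor of $2$ in the constant is immaterial, and your argument fully supports everything downstream.

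The one genuinely weak point is the final constant-tightening sketch. The left-heavy/right-heavy split as described does not obviously recover $4/\tau$: if you keep the two-sided occurrence count $occ>2^{\ell-1}$, the left-heavy segments live in a window of width about $\tfrac{3}{2}\cdot 2^{\ell+1}/\tau$ (they can still stick out $(L-1)/2$ to the right of $k$), which gives $6/\tau$ per class and a worse total; if instead you restrict to the one-sided window $[k-L_{\max}+1,k]$, you need each left-heavy segment to contain $\Omega(2^\ell)$ occurrences \emph{to the left of} $k$, and your chain argument only controls occurrences over the whole run, not over its prefix up to $k$. So either accept $8/\tau$ (which is all the paper needs) or rework the one-sided occurrence bound; as written, that last step does not go through.
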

\begin{proof}
Any such segment must be contained in the area $A[k-\lceil
2^{\ell+1}/\tau\rceil+1, k+\lceil 2^{\ell+1}/\tau\rceil-1]$,
and if $x$ is a $\tau$-majority in it, it must appear more than 
$\tau \lceil 2^\ell / \tau \rceil \ge 2^\ell $ times.
There can be at most $4/\tau$ different values of $x$ appearing more than 
$2^\ell$ times in an area of length less than $2^{\ell+2}/\tau$.
\end{proof}

Consider a coalesced bitmap $A'[1,n]$ of level $\ell$. All of its 1s come in
runs of lengths at least $b = \lceil 2^\ell/\tau\rceil$. We cut $A'$ into
{\em chunks} of length $b$ and define two bitmaps: $A'_1[1,n/b]$ will have $A'_1[i]=1$
iff the $i$th chunk of $A'$ is all 1s, and $A'_2[1,n/b]$ will have $A'_2[i]=1$
iff the $i$th chunk of $A'$ has 0s and 1s. Note that, since the runs of 1s are
of length at least $b$, inside a chunk with 0s and 1s there can be at most one 
01 and at most one 10, and the 10 can only come before the 01. 
Let $p_{10}[j]$ be the position, in the $j$th chunk
with 0s and 1s, of the 1 preceding a 0, where $p_{10}[j]=0$ if the chunk starts
with a 0. Similarly, let $p_{01}[j]$ be the position of the 0 preceding a 1,
with $p_{01}[j]=b$ if the chunk ends with a 0. It always holds that $p_{10}[j] <
p_{01}[j]$, and the number of 1s in the chunk is $r(j)=p_{10}[j]+(b-p_{01}[j])$.
Also, the rank up to position $k$ in the chunk, $r(j,k)$, is $k$ if
$k \le p_{10}[j]$, $p_{10}[j]$ if $p_{10}[j] < k \le p_{01}[j]$, and
$p_{10}[j]+(k-p_{01}[j])$ if $k > p_{01}[j]$. Then it holds that
\begin{eqnarray*}
&& rank(A',i) ~~=~~ b\cdot r_1 ~+~ 
                 \sum_{j=1}^{r_2} r(j) ~+~ \\
&& ~~~~~~~~~~~~~~~ 
 [\mathbf{if}~A'_2[1+\lfloor i/b \rfloor]=1~\mathbf{then}~
          r(r_2+1,k)~\mathbf{else}~
		A'_1[1+\lfloor i/b \rfloor]\cdot k],
\end{eqnarray*}
where $r_1 = rank(A'_1,\lfloor i/b\rfloor)$,
$r_2 = rank(A'_2,\lfloor i/b\rfloor)$, and $k = i~\mathrm{mod}~b$.
Note this can be computed in constant time as long as we have constant-time
$rank$ data structures on $A'_1$ and $A'_2$, and constant-time access and sums
on $p_{10}$ and $p_{01}$.

\begin{example}
Using $b=2^\ell$ to make it more interesting,
we would have three coalesced bitmaps: $A' = \langle
1~0~1~0~0~0~0 \rangle$, of level $\ell=0$, for the segments $[1,1]$ and
$[3,3]$;
$A'' = \langle 0~0~0~0~1~1~1 \rangle$, of level $\ell=1$, for the segment
$[5,7]$; and $A''' = \langle 1~1~1~1~1~1~0 \rangle$, of level $\ell=2$, for the
segment $[1,6]$. Consider level $\ell=0$ and $b=2$, and let us focus on $A'$. 
Then, we would have
$A'_1 = \langle 0~0~0~0 \rangle$,
$A'_2 = \langle 1~1~0~0 \rangle$, 
$p_{10} = \langle 1~1 \rangle$, and
$p_{01} = \langle 2~2 \rangle$.
\end{example}

To have constant-time sums on $p_{10}$ ($p_{01}$ is analogous), we store its
values in a bitmap $A'_{10}$, where we set all the bits at positions
$r + \sum_{j=1}^r p_{10}[j]$ to 1, for all $r$. Then we can recover 
$\sum_{j=1}^r p_{10}[j] = select(A'_{10},r)-r$. We use a bitmap representation
\cite{OS07} that solves $select$ in constant time (recall
Section~\ref{sec:bitmaps}).
Let $n'$ be the number of segments $C_x$ represented in bitmap
$A'$. Then there are at most $2n'$ chunks with 0s and 1s, and $A'_{10}$ contains
at most $2n'$ 1s and $2n'b$ 0s (as $0 \le p_{10}[j] \le b$). 
The size of the bitmap representation \cite{OS07} is
in this case $O(n'\log b) = O(n'(\ell + \log(1/\tau)))$ bits. On the other
hand, bitmaps $A'_1$ and $A'_2$ are represented in plain form
\cite{Mun96,Cla96}, requiring $O(n/b) = O(n\tau/2^\ell)$ bits.

Considering that there are $O(n/\tau)$ 1s overall, and that the runs of level
$\ell$ are of length at least $2^\ell/\tau$, we have that there can be at
most $n/2^\ell$ runs across the $O(1/\tau)$ bitmaps of level $\ell$.
Therefore, adding up the space over the bitmaps of level $\ell$, we have 
$O(n(\ell+\log(1/\tau))/2^\ell)$ bits. Added over all the levels $\ell$, this 
gives $O(n\log(1/\tau))$ bits.

Let us now consider the representation of the coalesced bitmaps $M'$. They
have total length $O(n/\tau)$ and contain $n$ 1s overall, therefore using the
representation of Lemma~\ref{lem:bitmap} with $s=1$, we have 
$O(n\log(1/\tau))$ bits of space. They solve $rank$ queries in time
$O(\log\log_w(1/\tau))$, and $select$ in constant time.

As we have to probe $O((1/\tau)\log n)$ coalesced bitmaps $M'$ in the worst
case, this raises our query time to $O((1/\tau)\log\log_w(1/\tau)\log n)$. 
This concludes the proof of Theorem~\ref{thm:main}, except for the
construction time (see the next section).

In our previous work \cite{NT14}, we had obtained
$O((1/\tau)\log n)$ time, but using $O((n/\tau)\log^*n)$ bits of space. It is 
not hard to obtain that time, using $O(n/\tau)$ bits, by simply representing
the coalesced bitmaps $M'$ using plain $rank$/$select$ structures
\cite{Cla96,Mun96}, or even using $O(n\log(1/\tau) +
(n/\tau)/\polylog n)$ bits, for any $\polylog n$, using compressed
representations \cite{Pat08}. The extra $O(\log\log_w(1/\tau))$ time factor
arises when we insist in obtaining the optimal $O(n\log(1/\tau))$ bit space.
We note that this time penalty factor vanishes when $1/\tau = w^{O(1)}$, which
includes the case where $1/\tau$ grows
polylogarithmically with $n$.

\section{Construction} \label{sec:constr}

The most complex part of the construction of our encoding is to build the
sets $C_x$. Once these are built, the structures described in
Section~\ref{sec:nbits} can be easily constructed in $o(n\log n)$ time:

\begin{enumerate}
\item The $O(n)$ segments $C_x$ belong to $[1,n]$, so they are sorted by 
starting point in $O(n)$ time.
\item We maintain a priority queue for each level $\ell$, containing the last 
segment of each coalesced bitmap. We use the queue to find the segment that 
finishes earliest in order to try to add the new segment of $C_x$ after it. 
We carry out, in total, $O(n)$ operations on those queues, and each contains 
$O(1/\tau)$ elements, thus they take total time $O(n\log(1/\tau))=o(n\log n)$.
\item The bitmaps $A'$ of each level $\ell$, represented with $A'_1$, $A'_2$,
$A'_{01}$ and $A'_{10}$, are easily built in $O(n/b)=O(n\tau/2^\ell)$ time.
Added over the $O(1/\tau)$ coalesced bitmaps of level $\ell$ this is
$O(n/2^\ell)$, and added over all the levels $\ell$ this gives $O(n)$ total
time.
\item The coalesced bitmaps $M'$ have $O(n)$ 1s overall, so their
representation (Lemma~\ref{lem:bitmap}) is also built in $O(n)$ time, except
for the predecessor structures, which need construction of deterministic
dictionaries. This can be done in $o(n\log n)$ total time \cite{Ruz08}.
\end{enumerate}

Now we show that the sets $C_x$ can be built in $O(n\log n)$ time, thus 
finishing the proof of Theorem~\ref{thm:main}.

We build the set of increasing positions $P_x$ where $x$ appears in $A$,
for each $x$, in $O(n\log n)$ total time (the elements of $A$ can be 
of any atomic type, so we only rely on a comparison-based dictionary to
maintain the set of different $x$ values and their $P_x$ lists). Now we
build $C_x$ from each $P_x$ using a divide-and-conquer approach, in
$O(|P_x|\log|P_x|)$ time, for a total construction time of $O(n\log n)$.

We pick the middle element $k \in P_x$ and compute in linear time the segment 
$[l,r] \in C_x$ that contains $k$. To compute $l$, we find the leftmost 
element $p_l \in P_x$ such that $x$ is a $\tau$-majority in $[p_l,k_r]$, for some
$k_r \in P_x$ with $k_r \ge k$. 

To find $p_l$, we note that it must hold that $(w(p_l,k-1)+w(k,k_r))/(k_r-p_l+1) >
\tau$, where $w(i,j)$ is the number of occurrences of $x$ in $A[i,j]$. The
condition is equivalent to $w(p_l,k-1)/\tau + p_l - 1 > k_r - w(k,k_r)/\tau$. 
Thus we compute in linear time the minimum value $v$ of $k_r - w(k,k_r)/\tau$ 
over all those $k_r \in P_x$ to the right of $k$, and then traverse all those 
$p_l \in P_x$ to the left of $k$, left to right, to find the first one that 
satisfies $w(p_l,k-1)/\tau + p_l + 1 > v$, also in linear time. 
Once we find the proper $p_l$ and its corresponding $k_r$, the starting 
position of the segment is slightly adjusted to the left of $p_l$, to be the 
smallest value that satisfies $w(p_l,k_r)/(k_r-l+1) > \tau$, that is, $l$
satisfies $l > -w(p_l,k_r)/\tau +k_r+1$, or $l = k_r - \lceil w(p_l,k_r)/\tau
\rceil + 2$.

Once $p_r$ and then $r$ are computed analogously, we insert $[l,r]$ into $C_x$
and continue recursively with the elements of $P_x$ to the left of $p_l$ and
to the right of $p_r$. Upon return, it might be necessary to join $[l,r]$ with
the rightmost segment of the left part and/or with the leftmost segment of the
right part, in constant time. The total construction time is $T(n) = O(n) + 
2T(n/2) = O(n\log n)$. 

\paragraph{Building multiple structures}

In order to answer $\tau'$-majority queries for any $\tau \le \tau' < 1$ in 
time related to $1/\tau'$ and not to $1/\tau$, we build the encoding
of Theorem~\ref{thm:main} for values $\tau'' = 1/2, 1/4, 1/8, \ldots, 
1/2^{\lceil \lg 1/\tau \rceil}$. Then, a $\tau'$-majority query is run on the 
structure built for $\tau'' = 1/2^{\lceil \lg 1/\tau' \rceil}$. Since 
$\tau'/2 < \tau'' \le \tau'$, the query time is 
$O((1/\tau'')\log\log_w(1/\tau'')\log
n)=O((1/\tau')\log\log_w(1/\tau')\log n)$.

As for the space, we build $O(\log(1/\tau))$ structures, so we use
$O(n\log^2(1/\tau))$ bits, and the construction time is
$O(n\log(1/\tau)\log n)$.

\begin{corollary} \label{cor:vartau}
Given a real number $0<\tau<1$, there exists an encoding using 
$O(n\log^2(1/\tau))$ bits that answers range $\tau'$-majority queries,
for any $\tau \le \tau' < 1$, in time $O((1/\tau')\log\log_w(1/\tau')\log
n)$, where $w=\Omega(\log n)$ is the RAM word size in bits.
The structure can be built in time $O(n\log(1/\tau)\log n)$.
\end{corollary}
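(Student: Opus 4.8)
The plan is to build a logarithmic collection of the encodings of Theorem~\ref{thm:main}, one for each scale of the parameter, and route each incoming query to the one whose construction parameter matches the query's $\tau'$ up to a constant factor. The reason this is needed is that the encoding of Theorem~\ref{thm:main} built for parameter $\tau$ answers every $\tau'$-majority query with $\tau \le \tau' < 1$, but its running time is governed by $1/\tau$ rather than by $1/\tau'$; so a single structure built for a small $\tau$ wastes time on queries with a much larger $\tau'$. To make the time a function of the query parameter, I would instantiate the structure at geometrically spaced construction parameters so that every possible $\tau'$ has a nearby precomputed scale.

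Concretely, I would build the encoding of Theorem~\ref{thm:main} for each $\tau'' = 2^{-j}$ with $1 \le j \le \lceil \lg(1/\tau) \rceil$, giving $O(\log(1/\tau))$ structures in total (the index starts at $j=1$ because $\tau' < 1$ forces $\lceil \lg(1/\tau') \rceil \ge 1$). Upon receiving a $\tau'$-majority query, I would dispatch it to the structure built for $\tau'' = 1/2^{\lceil \lg(1/\tau') \rceil}$. The key inequality to verify is $\tau'/2 < \tau'' \le \tau'$: the ceiling gives $2^{\lceil \lg(1/\tau') \rceil} \ge 1/\tau'$, hence $\tau'' \le \tau'$, while the rounding loses less than one bit, hence $\tau'' > \tau'/2$. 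Moreover $\tau \le \tau'$ guarantees $\lceil \lg(1/\tau') \rceil \le \lceil \lg(1/\tau) \rceil$, so the required structure is indeed among those built. Since $\tau'' \le \tau'$, Theorem~\ref{thm:main} applies and this structure correctly answers the $\tau'$-majority query, and the reporting of any $occ$ further occurrence positions in $O(occ)$ time is inherited verbatim.

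The resource accounting then falls out directly. Because $\tau'/2 < \tau'' \le \tau'$ we have $1/\tau'' = \Theta(1/\tau')$, so the query time $O((1/\tau'')\log\log_w(1/\tau'')\log n)$ of the chosen structure equals $O((1/\tau')\log\log_w(1/\tau')\log n)$, as claimed. For space, the structure at scale $j$ occupies $O(n\log(1/\tau'')) = O(nj)$ bits by Theorem~\ref{thm:main}; summing over $j$ from $1$ to $\lceil \lg(1/\tau) \rceil$ gives $O\!\big(n \cdot (\log(1/\tau))^2\big) = O(n\log^2(1/\tau))$ bits (a cruder bound of $O(\log(1/\tau))$ copies, each at most $O(n\log(1/\tau))$ bits, suffices equally well). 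For construction, each of the $O(\log(1/\tau))$ structures is built in $O(n\log n)$ time by Theorem~\ref{thm:main}, for a total of $O(n\log(1/\tau)\log n)$.

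There is no deep obstacle here: the entire argument is a standard "instantiate at geometric scales and dispatch" reduction on top of Theorem~\ref{thm:main}. The only point requiring care is the rounding step, where one must confirm simultaneously that the selected scale satisfies $\tau'' \le \tau'$ (so correctness transfers) and $\tau'' = \Theta(\tau')$ (so the time bound is stated in terms of the query parameter), and that this scale was actually precomputed given $\tau \le \tau'$.
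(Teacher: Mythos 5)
Your proposal is correct and matches the paper's own argument essentially verbatim: both build the encoding of Theorem~\ref{thm:main} for $\tau''=1/2,1/4,\ldots,1/2^{\lceil\lg(1/\tau)\rceil}$, dispatch a $\tau'$-query to the scale $\tau''=1/2^{\lceil\lg(1/\tau')\rceil}$ using $\tau'/2<\tau''\le\tau'$, and add up the space and construction costs over the $O(\log(1/\tau))$ copies. Nothing is missing.
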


\section{A Faster Data Structure}
\label{sec:optimal}

In this section we show how, by adding $O(n\log\log n)$ bits to our data 
structure, we can slash a $\log n$ factor from the query time, that is, we prove
Theorem~\ref{thm:optimal}. The result, as discussed in the Introduction,
yields the optimal query time $O(1/\tau)$ when $1/\tau = O(\polylog n)$, 
although the resulting space may not be optimal anymore.

The idea is inspired in a previous non-encoding data structure for majority
queries \cite{wads}. Consider a value $\ell$. Then we will cut $A$ into
consecutive pieces of length $2^\ell$ (said to be of {\em level} $\ell$) in 
two overlapped ways: $A[2^\ell k+1,2^\ell(k+1)]$ and $A[2^\ell
k+2^{\ell-1}+1,2^\ell(k+1)+2^{\ell-1}]$, for all $k \ge 0$.
We carry out this partitioning for every $\lceil \lg(1/\tau) \rceil \le \ell
\le \lceil \lg n \rceil$.

Note that there are $O(n/2^\ell)$ pieces of level $\ell$, and any interval 
$A[i,j]$ of length up to $2^\ell/2$ is contained in some piece $P$ of level 
$\ell$. Now, given a query interval $A[i,j]$, let
$\ell = \lceil \lg(j-i+1)\rceil+1$. Then, not only
$A[i,j]$ is contained in a piece $P$ of level $\ell$, but also any 
$\tau$-majority $x$ in $A[i,j]$ must be a $\tau/4$-majority in $P$: 
Since $j-i+1 > 2^\ell/4$, $x$ occurs more than $\tau(j-i+1) > (\tau/4)2^\ell$ 
times in $A[i,j]$, and thus in $P$.

Consider a $\tau/4$-majority $x$ in a given piece $P$ of level $\ell$ that is 
also a $\tau$-majority for some range $A[i,j]$ within $P$, where $2^\ell/4 <
j-i+1 \le 2^\ell/2$. By construction of our previous structures, there exists 
a maximal segment $C_x$ that contains the range $[i,j]$. If there is 
another range $A[i',j']$ within $P$ where $x$ is a $\tau$-majority, then there
exists another maximal segment $C'_x$ for the same $x$ within $P$. By our 
construction, if $C'_x \neq C_x$, then $C'_x$ is disjoint with $C_x$,
and thus each of them contains at least $(\tau/4)2^\ell$ distinct occurrences of
$x$. Obviously, segments $C_y$ for $\tau$-majorities $y \neq x$ contain
other $(\tau/4)2^\ell$ occurrences disjoint from those of $x$. Therefore, the 
number of distinct maximal segments $C$ that contain $\tau$-majorities at any
range $A[i,j]$ (with $j-i+1 > 2^\ell/4$) within $P$ is upper bounded by 
$4/\tau$. We will say those segments $C$ are {\em relevant} to $P$.

Therefore, for each piece $P$ of level $\ell$, we will store the index $r$ of
the coalesced
bitmap $A'_r$ (and its companion $M'_r$) to which each maximal segment $C$ that 
is relevant to $P$ belongs. Since there are at most $4/\tau$ such coalesced
bitmaps to record, out of a total of $O((1/\tau)\log n)$ coalesced bitmaps, 
$\gamma$-codes on a differential encoding of the subset values requires
$O((1/\tau)\log\log n)$ bits.\footnote{We could also afford to store
them in plain form using $O((1/\tau)(\log(1/\tau)+\log\log n))$ bits.} Added 
up over the $O(n/2^\ell)$ pieces of level $\ell \ge \lceil \lg(1/\tau)\rceil$, 
this yields $\sum_{\ell \ge \lceil\lg(1/\tau)\rceil} O((n/2^\ell)
(1/\tau) \log\log n) = O(n\log\log n)$ bits.

This information reduces the search effort to that of verifying $O(1/\tau)$
coalesced bitmaps $A'_r$ and $M'_r$ for the range $[i,j]$, and thus to
$O((1/\tau)\log\log_w (1/\tau))$ query time. However, for ranges shorter than
$1/\tau$, where no piece structure has been built, we still have the original
query time. To speed up this case, we build a second structure where, for each 
element $A[k]$, we identify the coalesced bitmap where the maximal segment 
$C_{A[k]}$ containing the segment $A[k,k]$ belongs, and store the identifier $r$
of the corresponding coalesced bitmap $A'_r$ (and $M'_r$) associated to $k$. This 
requires $O(n\log((1/\tau)\log n)) = O(n\log(1/\tau) + n\log\log n)$ further 
bits, and allows checking only one coalesced bitmap $A'_r$ (and $M'_r$) for each
of the $O(1/\tau)$ positions that need to be checked.

To finish the proof we must consider the construction time. The second
structure (for short ranges) is easily built with the general structure,
taking no additional time, by keeping track of which maximal segment 
$C_{A[k]}$ contains each segment $A[k,k]$ and which coalesced bitmap it is 
assigned. With this, the structure for long ranges can be built as follows:
for each position $A[k]$ contained in a piece $P$ of level $\ell$, consider 
the maximal 
segment $C_{A[k]}$ that contains it and determine whether it is relevant to $P$.
A weak test for this is to consider the coalesced bitmap $M'$ where
$C_{A[k]}$ is represented (which is precisely what the first structure stores
associated to $k$) and ask whether $M'$ contains more than $(\tau/4)2^\ell$
1s in the range of $P$. 
This must be the case if $C_{A[k]}$ is relevant to $P$. Although including
the identifier of each $M'$ that passes the test may add some nonrelevant
ones, we still cannot include more than $4/\tau$ coalesced bitmaps in the set,
as the 1s in the $M'$ bitmaps are disjoint.

The $rank$ operations on bitmaps $M'$ take $O(\log\log_w(1/\tau))$ time,
so we avoid them to count how many 1s does $M'$ contain in the range of $P$. 
Instead, we perform a preprocessing pass over $P$ as
follows: We initialize to zero a set of $O((1/\tau)\log n)$ counters, one per 
coalesced bitmap $M'$, and process $P$ left to right. We increase the counter 
associated to the bitmap $M'$ of each element $A[k]$ in $P$. At the end, we
know all the desired values. This takes $O(2^\ell)$ time, and a similar
postprocessing pass clears the counter for the next piece.

Therefore, we process all the pieces $P$ of level $\ell$ in time $O(2^\ell)$, 
which amounts to $O(n)$ time per level. Added over all the levels, this gives 
$O(n\log n)$ total time. This concludes the proof of Theorem~\ref{thm:optimal}.

\section{Conclusions}
\label{sec:concl}

A $\tau$-majority query on array $A[1,n]$ receives a range $[i,j]$ and 
returns all the elements appearing more than $\tau(j-i+1)$ times in $A[i,j]$. 
We have obtained the first results about {\em encodings} for answering range 
$\tau$-majority queries. Encodings are data structures that use less space than
what is required to store $A$ and answer queries without accessing $A$ at all.
In the encoding scenario we do not
report the $\tau$-majorities themselves, but one of their positions in $A[i,j]$.

We have proved that $\Omega(n\log(1/\tau))$ bits are necessary for any such 
encoding, even if it can only count the number of $\tau$-majorities in
any range. Then we presented an encoding that uses the optimal 
$O(n\log(1/\tau))$ bits, and answers queries in 
$O((1/\tau)\log\log_w(1/\tau)\log n)$ time in the RAM model with word size
$w=\Omega(\log n)$ bits. We also showed that this time can be divided by
$\log n$ if we add $O(n\log\log n)$ bits to the space. This yields various
space/time tradeoffs, shown in Table~\ref{tab:final}.
Our encoding can actually report any occurrence of each $\tau$-majority, in 
optimal extra time. The structure is built in $O(n\log n)$ time.

An open question is whether it is possible to achieve optimal query time within
optimal space for all values of $1/\tau$. As seen in Table~\ref{tab:final}, we
reach this only for $\log(1/\tau)=\Theta(\log\log n)$. This is also possible 
when $\log (1/\tau) = \Omega(\log n)$, where we leave the non-encoding scenario
\cite{wads}. Instead, our results for $\log(1/\tau)$ between $\log\log n$ and
$\log n$ have a small factor $O(\log\log_w (1/\tau))$ over the optimal time,
and those for $\log(1/\tau)$ below $\log\log n$ either require nonoptimal
$O(n\log\log n)$ bits of space, or an $O(\log n)$ factor over the optimal
time. It is not clear whether combined optimality can be reached.

Another open question is whether we can do better for weaker versions of the
problem we have not studied. For example, if we are only required to report
{\em any} occurrence of {\em any} $\tau$-majority (or, even less, telling 
whether or not there exists a $\tau$-majority), our lower bound based on
representing a bitmap $B$ shows that $\Omega(n)$ bits are necessary, but we
do not know if this bound is tight.


\bibliographystyle{elsarticle-harv}
\bibliography{paper}

\begin{thebibliography}{27}
\expandafter\ifx\csname natexlab\endcsname\relax\def\natexlab#1{#1}\fi
\expandafter\ifx\csname url\endcsname\relax
  \def\url#1{\texttt{#1}}\fi
\expandafter\ifx\csname urlprefix\endcsname\relax\def\urlprefix{URL }\fi

\bibitem[{Belazzougui et~al.(2013)Belazzougui, Gagie, and Navarro}]{wads}
Belazzougui, D., Gagie, T., Navarro, G., 2013. Better space bounds for
  parameterized range majority and minority. In: Proc. 11th Annual Workshop on
  Algorithms and Data Structures (WADS). pp. 121--132.

\bibitem[{Berkman and Vishkin(1993)}]{BV93}
Berkman, O., Vishkin, U., 1993. Recursive star-tree parallel data structure.
  SIAM Journal on Computing 22~(2), 221--242.

\bibitem[{Bose et~al.(2005)Bose, Kranakis, Morin, and Tang}]{lboo}
Bose, P., Kranakis, E., Morin, P., Tang, Y., 2005. Approximate range mode and
  range median queries. In: Proc. 22nd International Symposium on Theoretical
  Aspects of Computer Science (STACS). pp. 377--388.

\bibitem[{Brodal et~al.(2009)Brodal, Fagerberg, Greve, and
  L{\'o}pez-Ortiz}]{bro}
Brodal, G., Fagerberg, R., Greve, M., L{\'o}pez-Ortiz, A., 2009. Online sorted
  range reporting. In: Proc. 20th Annual International Symposium on Algorithms
  and Computation (ISAAC). pp. 173--182.

\bibitem[{Chan et~al.(2012{\natexlab{a}})Chan, Durocher, Larsen, Morrison, and
  Wilkinson}]{stacs}
Chan, T., Durocher, S., Larsen, K., Morrison, J., Wilkinson, B.,
  2012{\natexlab{a}}. Linear-space data structures for range mode query in
  arrays. In: Proc. 29th International Symposium on Theoretical Aspects of
  Computer Science (STACS). pp. 290--301.

\bibitem[{Chan et~al.(2012{\natexlab{b}})Chan, Durocher, Skala, and
  Wilkinson}]{swat}
Chan, T., Durocher, S., Skala, M., Wilkinson, B., 2012{\natexlab{b}}.
  Linear-space data structures for range minority query in arrays. In: Proc.
  13th Scandinavian Symposium on Algorithmic Theory (SWAT). pp. 295--306.

\bibitem[{Chan and Wilkinson(2013)}]{CW13}
Chan, T., Wilkinson, B., 2013. Adaptive and approximate orthogonal range
  counting. In: Proc. 24th Annual ACM-SIAM Symposium on Discrete Algorithms
  (SODA). pp. 241--251.

\bibitem[{Clark(1996)}]{Cla96}
Clark, D., 1996. Compact {PAT} trees. Ph.D. thesis, University of Waterloo,
  Canada.

\bibitem[{Durocher et~al.(2013)Durocher, He, Munro, Nicholson, and
  Skala}]{Steph}
Durocher, S., He, M., Munro, I., Nicholson, P., Skala, M., 2013. Range majority
  in constant time and linear space. Information and Computation 222, 169--179.

\bibitem[{Elias(1974)}]{Eli74}
Elias, P., 1974. Efficient storage and retrieval by content and address of
  static files. Journal of the ACM 21, 246--260.

\bibitem[{Fano(1971)}]{Fan71}
Fano, R., 1971. On the number of bits required to implement an associative
  memory. Memo 61, Computer Structures Group, Project MAC, Massachusetts.

\bibitem[{Fischer and Heun(2011)}]{RMQ1}
Fischer, J., Heun, V., 2011. Space-efficient preprocessing schemes for range
  minimum queries on static arrays. SIAM Journal of Computing 40~(2), 465--492.

\bibitem[{Gagie et~al.(2011)Gagie, He, Munro, and Nicholson}]{spire}
Gagie, T., He, M., Munro, I., Nicholson, P., 2011. Finding frequent elements in
  compressed 2d arrays and strings. In: Proc. 18th International Symposium on
  String Processing and Information Retrieval (SPIRE). pp. 295--300.

\bibitem[{Greve et~al.(2010)Greve, J{\o}rgensen, Larsen, and Truelsen}]{lbo}
Greve, M., J{\o}rgensen, A., Larsen, K.~D., Truelsen, J., 2010. Cell probe
  lower bounds and approximations for range mode. In: Proc. 37th International
  Colloquium on Automata, Languages and Programming (ICALP). pp. 605--616.

\bibitem[{Grossi et~al.(2013)Grossi, Iacono, Navarro, Raman, and
  Satti}]{grossi}
Grossi, R., Iacono, J., Navarro, G., Raman, R., Satti, S.~R., 2013. Encodings
  for range selection and top-k queries. In: Proc. 21st Annual European
  Symposium on Algorithms (ESA). pp. 553--564.

\bibitem[{Karpinski and Nekrich(2008)}]{KarpinskiN08}
Karpinski, M., Nekrich, Y., 2008. Searching for frequent colors in rectangles.
  In: Proc. 20th Canadian Conference on Computational Geometry (CCCG). pp.
  11--14.

\bibitem[{Karpinski and Nekrich(2011)}]{NNN}
Karpinski, M., Nekrich, Y., 2011. Top-k color queries for document retrieval.
  In: Proc. 22nd Annual ACM-SIAM Symposium on Discrete Algorithms (SODA). pp.
  401--411.

\bibitem[{Munro(1996)}]{Mun96}
Munro, I., 1996. Tables. In: Proc. 16th Conference on Foundations of Software
  Technology and Theoretical Computer Science (FSTTCS). pp. 37--42.

\bibitem[{Navarro et~al.(2014)Navarro, Raman, and Rao}]{NRR14}
Navarro, G., Raman, R., Rao, S.~S., 2014. Asymptotically optimal encodings for
  range selection. In: Proc. 34th Annual Conference on Foundations of Software
  Technology and Theoretical Computer Science (FSTTCS). LNCS. To appear.

\bibitem[{Navarro and Thankachan(2014)}]{NT14}
Navarro, G., Thankachan, S., 2014. Encodings for range majority queries. In:
  Proc. 25th Annual Symposium on Combinatorial Pattern Matching (CPM). LNCS
  8486. pp. 262--272.

\bibitem[{Okanohara and Sadakane(2007)}]{OS07}
Okanohara, D., Sadakane, K., 2007. Practical entropy-compressed rank/select
  dictionary. In: Proc. 9th Workshop on Algorithm Engineering and Experiments
  (ALENEX). pp. 60--70.

\bibitem[{Petersen and Grabowski(2009)}]{R2}
Petersen, H., Grabowski, S., 2009. Range mode and range median queries in
  constant time and sub-quadratic space. Information Processing Letters
  109~(4), 225--228.

\bibitem[{P\u{a}tra\c{s}cu and Thorup(2008)}]{PT08}
P\u{a}tra\c{s}cu, M., Thorup, M., 2008. Time-space trade-offs for predecessor
  search. CoRR cs/0603043v1, {\tt http://arxiv.org/pdf/cs/0603043v1}.

\bibitem[{P\v{a}tra\c{s}cu(2008)}]{Pat08}
P\v{a}tra\c{s}cu, M., 2008. Succincter. In: Proc. 49th Annual IEEE Symposium on
  Foundations of Computer Science (FOCS). pp. 305--313.

\bibitem[{Raman et~al.(2007)Raman, Raman, and Rao}]{RRR07}
Raman, R., Raman, V., Rao, S.~S., 2007. Succinct indexable dictionaries with
  applications to encoding {\it k}-ary trees, prefix sums and multisets. ACM
  Transactions on Algorithms 3~(4), article 43.

\bibitem[{Ru{\v{z}}i{\'c}(2008)}]{Ruz08}
Ru{\v{z}}i{\'c}, M., 2008. Constructing efficient dictionaries in close to
  sorting time. In: Proc. 35th International Colloquium on Automata, Languages
  and Programming (ICALP). LNCS 5125. pp. 84--95 (part I).

\bibitem[{Skala(2013)}]{skala}
Skala, M., 2013. Array range queries. In: Space-Efficient Data Structures,
  Streams, and Algorithms. LNCS. Springer, pp. 333--350.

\end{thebibliography}

\end{document}